\documentclass[11pt]{article}

\usepackage{amsmath,graphicx}
\usepackage{tabularx}
\usepackage{amsfonts}
\usepackage{stackrel}
\usepackage{amsthm}
\usepackage{bbm,fullpage}
\usepackage{bm}
\usepackage{thmtools} 

\usepackage{newpxtext}
\usepackage[varg,bigdelims]{newpxmath}

\usepackage[usenames,dvipsnames]{xcolor}

\usepackage{tcolorbox,enumerate}
\tcbuselibrary{skins,breakable}
\tcbset{enhanced jigsaw}

\colorlet{YellowOrange}{RawSienna}

\usepackage[backend=biber, style=alphabetic, backref=true,maxbibnames=99, url=false]{biblatex} 
\usepackage{bbm}
\usepackage{hyperref}
\hypersetup{
    colorlinks=true,
    linkcolor=violet,
    filecolor=magenta,      
    urlcolor=cyan,
    citecolor=blue,
    pdffitwindow=true,
}\usepackage[capitalize, nameinlink]{cleveref}

%

\usepackage{algorithm,algpseudocode}
\usepackage{multicol}
\usepackage[scaled]{helvet} 


\theoremstyle{plain}
\newtheorem{theorem}{Theorem}[section]
\newtheorem{lemma}[theorem]{Lemma}
\newtheorem{corollary}[theorem]{Corollary}

\newtheorem{fact}[theorem]{Fact}
\newtheorem{conjecture}[theorem]{Conjecture}

\newtheorem{claim}[theorem]{Claim}

\newtheorem{definition}[theorem]{Definition}

\theoremstyle{remark}

\def\eps{\epsilon}
\renewcommand{\P}[1]{\mathbb{P}\left[#1\right]}

\newcommand{\PP}[2]{\mathbb{P}_{#1} \left[#2\right]}
\newcommand{\E}[1]{\mathbb{E}\left[#1\right]}

\newcommand{\seteq}{\mathrel{\mathop:}=}

\newcommand{\R}{\mathbb R}


\usepackage{authblk}

\title{Counting and Sampling Perfect Matchings\\ in Regular Expanding Non-Bipartite Graphs}

\addbibresource{general.bib}
\DeclareMathOperator{\poly}{poly}

\author{Farzam Ebrahimnejad\thanks{\href{mailto:febrahim@cs.washington.edu}{febrahim@cs.washington.edu}
Research supported in part by NFS grant CCF-1552097.}}
\author{Ansh Nagda\thanks{\href{mailto:ansh@cs.washington.edu}{ansh@cs.washington.edu}}}
\author{Shayan Oveis Gharan\thanks{\href{mailto:shayan@cs.washington.edu}{shayan@cs.washington.edu}. Research supported by Air Force Office of Scientific Research grant FA9550-20-1-0212, NSF grants  CCF-1552097, CCF-1907845, and a Sloan fellowship.}} 
\affil{University of Washington}

\begin{document}
\maketitle

\thispagestyle{empty}

\begin{abstract}
We show that the ratio of the number of near perfect matchings to the number of perfect matchings in  $d$-regular strong expander (non-bipartite) graphs, with $2n$ vertices, is a polynomial in $n$, thus the Jerrum and Sinclair Markov chain \cite{JS89} mixes in polynomial time and generates an (almost) uniformly random perfect matching. Furthermore, we prove that such graphs have at least $\Omega(d)^{n}$ many perfect matchings, thus proving the Lovasz-Plummer conjecture \cite{LP86} for this family of graphs.
\end{abstract}
\clearpage
\setcounter{page}{1}

\section{Introduction}
Given a (general) graph $G=(V,E)$ with $2n=|V|$ vertices, 
the problem of counting the number of perfect matchings in $G$ is one of the most fundamental open problems in the field of counting. Jerrum and Sinclair  in their landmark result \cite{JS89} designed a Monte Carlo Markov Chain (MCMC) algorithm for this task and proved that such an algorithm runs in polynomial time if the ratio of the number of near perfect matchings to the number of perfect matchings is bound by a polynomial (in $n$). As a consequence one would be able to count perfect matchings if $G$ is {\em very} dense, i.e., it has min-degree at least $n$. Not much is known beyond this case, despite several exciting results when the given graph $G$ is bipartite \cite{JV96,LSW00,Bar99,JSV04,BSVV06}.

This problem is also extensively studied in combinatorics. Around 40 years ago, Falikman and Egorychev \cite{Ego81,Fal81} proved the van-der-Waerden conjecture, thus showing that if $G$ is a $d$-regular bipartite graph, then it has at least $(d/e)^{n}$ perfect matchings. This bound was further improved by Schrijver \cite{Sch98} and simpler and more general proofs were found \cite{Gur06, AOV21}. But it remains a mystery whether van-der-Waerden conjecture extends to non-bipartite graphs. Lovasz, Plummer most famously made the following conjecture:
\begin{conjecture}[{\cite[Conjecture 8.1.8]{LP86}}]\label{conj:LP}
For $d\geq 3$, there exist constants $c_1(d), c_2(d)>0$ such that any $d$-regular $k-1$-edge connected graph $G$ with $2n$ vertices contains at least $c_1(d)c_2(d)^{n}$ perfect matchings and $c_2(d)\to \infty$ as $d\to\infty$. 
\end{conjecture}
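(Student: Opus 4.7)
The plan is to extend the real-stable polynomial and capacity approach that proved the bipartite case (Egorychev--Falikman, Schrijver, Gurvits) to the general non-bipartite setting. The central object is the Heilmann--Lieb matching generating polynomial $M_G(x) = \sum_{k=0}^n m_k(G)\, x^k$, where $m_k(G)$ counts matchings of size $k$; this polynomial is real-rooted with all roots in a $d$-controlled interval, and the number of perfect matchings is the top coefficient $m_n(G)$.

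First, I would introduce a capacity-type quantity $\kappa(G) = \inf_{x>0} x^{-n} M_G(x)$ and prove, using Heilmann--Lieb real-rootedness together with Newton-type coefficient inequalities, a bound of the form $m_n(G) \geq c(d)\,\kappa(G)$. This would be analogous to Gurvits's capacity inequality for permanents, with the matching polynomial replacing the permanental polynomial; real-rootedness plus the $d$-dependent root bound should make $\kappa(G)$ a faithful proxy for $m_n(G)$ up to factors depending only on $d$. The advantage is that $\kappa(G)$ is a global, variational quantity that we can attack by structural arguments on $G$, rather than a raw enumeration problem.

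Second, I would lower bound $\kappa(G)$ using the edge-connectivity and $d$-regularity hypotheses via Lov\'asz's tight-cut decomposition of matching-covered graphs into \emph{bricks} and \emph{braces}. The perfect matching count factorizes along tight cuts; brace (bipartite) factors are handled directly by Schrijver's bound; brick factors require a new $d$-regular capacity estimate. The edge-connectivity hypothesis should control the number and size of tight cuts, preventing the decomposition from shattering into many small pieces that each lose a constant factor in the overall $c_2(d)^n$ bound; $d$-regularity then propagates into each piece and keeps the per-brick capacity estimate uniform.

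The main obstacle, and the reason this conjecture remains open in full generality, is bounding the number of perfect matchings in an arbitrary brick. Bricks are governed by Pfaffian/hafnian structure rather than by a permanent, and no clean Gurvits-style capacity inequality is known in that setting. Moreover a single small tight cut of size $3$ or $5$ can concentrate nearly all of the matching structure in one brick, so the brick estimate is unavoidable and cannot be finessed by a clever global averaging argument. This is precisely where the present paper specializes to strong spectral expanders, since expansion rules out small tight cuts and aligns the matching polynomial with the spectrum of the adjacency matrix; extending to merely edge-connected $d$-regular graphs would require a genuinely new real-stable inequality, or a correlation identity for alternating cycles in $d$-regular graphs, that does not rely on a spectral gap.
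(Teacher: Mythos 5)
This statement is an open conjecture (Lovász--Plummer, \cite[Conjecture 8.1.8]{LP86}), not a theorem that the paper proves. The paper proves only a restricted version of it: \cref{thm:lower-bound} establishes the bound $\Omega(d)^n$ for $d$-regular graphs that are additionally $\eps$-spectral expanders with $\eps\leq 1/11$, and notes that combining this with \cite{EKFN11} handles \cref{conj:LP} for that family. There is no proof of the general conjecture in the paper for you to be compared against, so the right question is whether your plan would close the gap that the paper leaves open --- and by your own admission, it does not.

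Your proposal is a reasonable sketch of the ``dream'' capacity approach, but it contains a genuine and load-bearing gap that you yourself flag: there is no known Gurvits-style capacity inequality for bricks. Once you pass to the brick--brace decomposition, the brace (bipartite) pieces are indeed handled by Schrijver/Gurvits, but a single brick can carry essentially all of the matching structure, and for bricks you have no lower bound on $m_n$ in terms of $\kappa$ that beats exponentially small constants. Real-rootedness of the Heilmann--Lieb polynomial gives you log-concavity of the $m_k$ and a root bound of $2\sqrt{d-1}$, but neither of those, alone or together, yields $m_n \geq c(d)\kappa(G)$ with $c(d)$ independent of $n$; the Newton inequalities degrade multiplicatively over $n$ steps. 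This is not a technicality --- it is exactly where the conjecture is hard, and where the bipartite proof (which leans on the Birkhoff polytope and the permanental polynomial's product structure) has no non-bipartite analogue. A further worry: your step ``edge-connectivity controls the number and size of tight cuts'' is plausible in spirit but needs a quantitative statement, since the conjecture only assumes $(d-1)$-edge-connectivity (note: the paper's statement has a typo, writing ``$k-1$''), which does allow $3$-cuts and $5$-cuts when $d$ is small, and the decomposition can have $\Theta(n)$ tight cuts. The paper sidesteps all of this by taking a completely different, non-algebraic route --- a greedy construction producing $\Omega(d)^{(1-\eps)n}$ near-perfect matchings (\cref{lem:Meps-lower-bound}) plus an augmenting-path argument (\cref{lem:main}, \cref{lem:ratio}) bounding the ratio of near-perfect to perfect matchings --- which is why it needs the spectral gap hypothesis and cannot reach the full conjecture either.
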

To this date the above conjecture is only proved for $d=3$ \cite{EKFN11}, although the same proof shows that the conjecture holds for all $d\geq 3$ as long as $c_2(d)$ is allowed to be a fixed constant.

At a high-level, the study of perfect matchings in general graphs faces the following barriers:
\begin{itemize}
\item Unlike bipartite graphs, the perfect matching polytope of a general graph has exponentially many constraints, and it is believed that there does not exist any poly-size convex program to test whether a given graph has perfect matchings. This fact significantly limits exploiting Gurvits' like techniques	 \cite{Gur06} in lower-bounding the number of perfect matchings.
\item In a bipartite graph, any odd alternating {\em walk} (that starts and ends at  un-saturated vertices) can be used to extend a near perfect matching to a perfect matching. However, in a general graph, an odd alternating walk may contain odd cycles. Therefore, typical augmenting path arguments which bound the ratio of near perfect to perfect matchings fail in a non-bipartite graph (see e.g., \cite{JV96}). 
\end{itemize}
In this paper we study perfect matchings in regular {\em strong} expander graphs: We show that for these graphs the classical algorithm of \cite{JS89} runs in polynomial time and can generate an approximately uniform random perfect matching. On the combinatorial side, we prove a significantly stronger version of \cref{conj:LP} for this family of graphs.

\subsection{Main Contributions}
Given a  graph $G = (V, E)$, let $A_G\in \mathbb{R}^{2n\times 2n}$ be its adjacency matrix, and let $D\in \R^{2n\times 2n}$ be the diagonal matrix of vertex degree. The normalized adjacency matrix of $G$ is defined as $\tilde{A}_G=D^{-1/2}AD^{-1/2}$; when $G$ is clear in the context we may drop the subscript. Let $\lambda_1\geq \lambda_2\geq \dots\geq \lambda_{2n}$ be the eigenvalues of $\tilde{A}$. We write 
$$\sigma_2(\tilde{A})=\max\{\lambda_2,|\lambda_{2n}|\},$$
to denote the largest eigenvalue of $\tilde{A}$ in absolute value (excluding $\lambda_1$). 

\begin{definition}
For $0<\epsilon<1$, we write $G$ is an $\epsilon$-spectral expander if $\sigma_2(\tilde{A})\leq \eps$.
\end{definition}


For two probability distributions $\mu,\nu$ defined in $\{1,\dots,n\}$, the total variation distance of $\mu,\nu$ is $\frac12\sum_{i=1}^n |\mu_i-\nu_i|$.
\begin{theorem}[Algorithm]\label{thm:alg}
There is a randomized algorithm that for $\eps\leq 1/11$, $\delta >0$, given a $d$-regular $\epsilon$-spectral expander $G$ on $2n$ vertices outputs a perfect matching of $G$ from a distribution $\mu$ of total variation distance $\delta$ of the uniform distribution (of perfect matchings) in time $\poly(n^{\log_{1/\eps}d} ,\log(1/\delta))$. Furthermore, there is a randomized algorithm that for any $\delta>0$ approximates the number of perfect matchings of $G$ up to $1\pm\delta$ multiplicative factor in time $\poly(n^{\log_{1/\eps}d} ,1/\delta)$.
\end{theorem}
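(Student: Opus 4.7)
The overall strategy is to plug $G$ into the Jerrum--Sinclair Markov chain of \cite{JS89}, which walks on the union of perfect and near-perfect matchings. Their analysis shows this chain generates an almost-uniform perfect matching in time polynomial in $n$ and in
\[
r(G) \seteq \frac{|\mathcal{M}_{n-1}(G)|}{|\mathcal{M}_n(G)|},
\]
where $\mathcal{M}_k(G)$ is the set of size-$k$ matchings. Granting a bound $r(G)\le n^{O(\log_{1/\eps} d)}$, the sampling statement is immediate; the counting statement follows from the standard reduction of counting perfect matchings to approximate sampling via a telescoping product of polynomially many ratios, each estimated with $\poly(n/\delta)$ draws from the chain. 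So the real content is the ratio bound for $d$-regular $\eps$-spectral expanders.

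To bound $r(G)$, I would construct a charging map of bounded multiplicity from near-perfect to perfect matchings. Fix a near-perfect matching $N$ with unsaturated vertices $u,v$; the goal is to exhibit at least one \emph{alternating path} $P$ of length $\ell = O(\log_{1/\eps} d)$ from $u$ to $v$, so that $N\triangle P$ is a perfect matching. Since the number of length-$\ell$ paths in $G$ is at most $n^\ell$, the resulting map $(N,P)\mapsto N\triangle P$ has multiplicity at most $n^{O(\log_{1/\eps} d)}$, yielding the claimed ratio. The choice of $\ell$ is the spectral scale at which a walk of length $\ell$ in the expander equilibrates enough that the target vertex is guaranteed to be reachable by an alternating walk: eigen-expanding $\tilde A^\ell$ and comparing the bulk term of order $1/n$ to the error $\sigma_2(\tilde A)^\ell$ gives existence of such walks once $\ell$ crosses a threshold consistent with the exponent $\log_{1/\eps} d$.

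The main obstacle, and the reason the bipartite arguments of \cite{JV96} do not apply, is that an alternating walk in a non-bipartite graph can traverse odd cycles, in which case its symmetric difference with $N$ is not a matching. I plan to handle this by working with the matrix whose entries count length-$\ell$ alternating walks relative to $N$, built by interleaving $A_G$ with the adjacency matrix of $N$, and then bounding the number of walks that revisit vertices via a spectral estimate of the form $\sigma_2(\tilde A)^{\Omega(\ell)}\cdot \poly(n)$. The hypothesis $\eps \le 1/11$ is what ensures this error is smaller than the main term by a factor that survives the union bound over pairs $(u,v)$ and over possible cycle-crossings. Every surviving walk is then a genuine alternating path, and its symmetric difference with $N$ is a distinct perfect matching.

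Combining this ratio bound with the $(n\, r(G))^{O(1)}\log(1/\delta)$ mixing-time guarantee of \cite{JS89} delivers the sampling runtime. For counting, feeding the sampler into the standard sampling-to-counting reduction with $\poly(n/\delta)$ samples per ratio yields the FPRAS, provided the intermediate instances satisfy comparable ratio bounds --- a condition one handles either by using a self-reducibility scheme that only queries the chain on sub-structures of the original $G$, or by observing that the ratio bound is robust under the particular perturbations the reduction introduces.
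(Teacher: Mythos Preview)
Your high-level framework is right and matches the paper: bound $m(n-1)/m(n)$ by finding a short augmenting path for every near-perfect matching, then invoke \cite{JS89}. But two things are off. First, the path length $\ell=O(\log_{1/\eps}d)$ cannot be correct: for constant $d$ this is $O(1)$, yet a $d$-regular expander has diameter $\Theta(\log n)$, so two unsaturated vertices need not even be within constant distance. The paper obtains augmenting paths of length $\rho=O(\log_{1/\eps} n)$ and then bounds the multiplicity by $d^{(\rho-1)/2}$ (half the steps are forced matching edges), which gives $d^{O(\log_{1/\eps} n)}=n^{O(\log_{1/\eps} d)}$. Your cruder $n^\ell$ multiplicity bound together with the correct $\ell=\Theta(\log_{1/\eps} n)$ would overshoot to $n^{O(\log_{1/\eps} n)}$.

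Second, and more seriously, your plan for odd cycles --- analyze the interleaved matrix $A_G P_M A_G P_M\cdots$ and subtract off a spectral estimate for vertex-revisiting walks --- does not work as stated. The product $\tilde A_G P_M$ is not symmetric and its spectral radius is not governed by $\sigma_2(\tilde A_G)$; in fact $P_M$ has all singular values equal to $1$, so there is no obvious decay in $(\tilde A_G P_M)^k$ beyond what $\tilde A_G$ alone gives, and no clean ``main term $1/n$ vs.\ error $\eps^\ell$'' expansion. The paper circumvents this entirely with a \emph{random bipartition}: for each matching edge $(u,v)\in M$ flip a coin to put $u\in L,v\in R$ or vice versa, and put the two unsaturated vertices on opposite sides. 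This yields a bipartite graph $G_M(\omega)$ in which every alternating walk is automatically free of odd cycles. One then grows alternating BFS sets $L_0\subseteq L_1\subseteq\cdots$ from one side inside $G_M(\omega)$; using Tanner's bound and a Hoeffding argument over the random coin flips, each $|L_i|$ grows by a factor $\Omega(1/\eps)$ with high probability, so after $O(\log_{1/\eps} n)$ steps $|L_t|>2\eps n$. The same holds symmetrically from the other side, and the expander mixing lemma then guarantees an edge between the two large sets, closing an alternating walk which, being in a bipartite graph, can be shortcut to an alternating path. This random bipartition is the key idea you are missing; without it, the odd-cycle obstruction you correctly identify is not overcome by the spectral sketch you propose.
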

In particular, observe that the running time of the above algorithms is polynomial in $n$ if $d$ is a constant or $1/\eps$ is a polynomial in $d$ and it is  quasi-polynomial in $n$ otherwise.

\begin{theorem}[Lower Bound]\label{thm:lower-bound}
	For any $\eps\leq 1/11$, every $d$-regular $\epsilon$-spectral expander on $2n$ vertices has at least $(d/e)^n \left(\frac{\eps}{2e^3 d^6}\right)^{\eps n}$ many perfect matchings.
\end{theorem}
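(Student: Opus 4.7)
The shape of the bound---main term $(d/e)^n$ (matching the van der Waerden count for $d$-regular bipartite graphs), multiplied by a correction $(\epsilon/(2e^3 d^6))^{\epsilon n}$ vanishing as $\epsilon \to 0$---strongly suggests a proof that reproduces the bipartite van der Waerden argument for non-bipartite expanders, with the spectral expansion paying for the non-bipartite loss.

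I would set up an inductive ratio argument on $m_k = |\{\text{matchings of size }k\text{ in }G\}|$, writing $m_n = \prod_{k=0}^{n-1}(m_{k+1}/m_k)$. The double count $(k+1)\,m_{k+1} = \sum_{M:|M|=k}(\text{extensions of }M)$, together with the identity $(\text{extensions of }M) = d(n-2k) + |E(V(M))|$ (valid for $d$-regular $G$, where $V(M)$ is the $2k$-vertex set matched by $M$), reduces the problem to bounding $|E(V(M))|$ on average over size-$k$ matchings. For a uniformly random $2k$-subset $S$ of $V$, the expander mixing lemma applied to the doubly stochastic matrix $\tfrac1d A_G$ gives $|E(S)| = dk^2/n \pm O(\epsilon dk)$; pushing this estimate to the matching ensemble via a second-moment argument---where the spectral expansion controls edge-edge correlations---should yield an effective ratio
\[
m_{k+1}/m_k \;\ge\; C\cdot\frac{d(n-k)^2}{n(k+1)}\quad\text{for }n-k \ge C\sqrt{\epsilon}\,n,
\]
with $C>0$ an absolute constant. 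Telescoping this product through Stirling's formula in the regular regime recovers the main $(d/e)^n$ term, essentially reproducing the van der Waerden bound for non-bipartite graphs.

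The main obstacle is the ``end-game'' regime $n-k \le C\sqrt{\epsilon}\,n$, where the expander-mixing error $\epsilon dk$ is of the same order as the leading term $d(n-k)^2/n$, so a separate argument is needed. Here I would invoke the small-set expansion implied by $\epsilon \le 1/11$ (for $|S|\le 2\epsilon n$, the bound $|E(S)| \ge \Omega(d|S|^2/n)$ still holds up to constants), combined with an alternating-path reconfiguration to lower bound each of the final ratios by at least $\Omega(\epsilon/d^6)$. Summing these end-game contributions over the last $\epsilon n$ steps yields exactly the correction factor $(\epsilon/(2e^3 d^6))^{\epsilon n}$.

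The hardest technical piece will be the second-moment bound on $|E(V(M))|$ over the matching ensemble: matchings are not uniformly-random vertex subsets, and without the spectral expansion to decorrelate edges, a random matching's vertex set could concentrate on edge-dense regions and leave the complement $V \setminus V(M)$ starved of extensions, making $|E(V(M))|$ far smaller than the random-subset expectation $dk^2/n$. Controlling this second moment via trace inequalities on $A_G^k$ is where the spectral gap enters most essentially.
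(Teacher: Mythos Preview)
Your overall architecture (bulk regime recovering $(d/e)^n$, then a separate end-game for the last $\approx \epsilon n$ steps) matches the paper's, but the weight you assign to the two pieces is inverted, and the end-game sketch has real gaps.

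\textbf{Bulk regime.} The second-moment argument you flag as ``the hardest technical piece'' is unnecessary. The expander mixing lemma is a \emph{pointwise} statement: for \emph{every} set $S$ of size $2k$ one has $|E(G[S])| \ge dk^2/n - \epsilon dk$, hence for every $k$-matching $M$ the extension count $d(n-2k)+|E(G[V(M)])| \ge d(n-k)^2/n - \epsilon dk$. No averaging over the matching ensemble, no edge--edge correlation control, is needed. (The paper in fact bypasses ratios entirely here and uses a greedy construction: repeatedly pick the max-degree vertex among the unmatched ones, whose degree in the residual graph is at least $d(|S|/2n-\epsilon)$ by mixing; counting the choices gives $m((1-\epsilon)n) \ge (d/e)^{(1-\epsilon)n} e^{-\epsilon n}$ directly.) There is also a quantitative slip: your bulk bound $m_{k+1}/m_k \gtrsim d(n-k)^2/(n(k+1))$ holds only while $n-k \gtrsim \sqrt{\epsilon}\,n$, yet you later say the end-game is the last $\epsilon n$ steps; since $\sqrt{\epsilon}>\epsilon$, the two regimes do not meet where you want them to.

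\textbf{End-game.} This is where the actual difficulty lives, and two specific claims in your sketch fail. First, ``for $|S|\le 2\epsilon n$, the bound $|E(S)|\ge \Omega(d|S|^2/n)$ still holds'' is false: mixing gives $|E(S)| \ge d|S|^2/(4n) - \epsilon d|S|/2$, which is \emph{nonpositive} throughout $|S|\le 2\epsilon n$, so you get no internal-edge lower bound for the unmatched set in this regime. Second, and more importantly, ``alternating-path reconfiguration'' is exactly where non-bipartiteness bites: an alternating \emph{walk} from one unsaturated vertex to another may traverse an odd cycle and therefore not yield an augmenting \emph{path}, so the usual BFS-style argument (as in Jerrum--Vazirani for bipartite graphs) does not go through. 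The paper's main technical contribution is precisely to overcome this: one takes a \emph{random bipartition} $L\cup R$ of $V$ that places the two endpoints of each matching edge on opposite sides, and runs the alternating-BFS inside the resulting bipartite subgraph $G_M(\omega)$, where odd cycles cannot occur. A concentration argument (Hoeffding over the random orientations $\omega$) shows the BFS layers still expand by a factor $\Theta(1/\epsilon)$ with probability $>1/2$, yielding augmenting paths of length $O(\log_{1/\epsilon}(\epsilon n/(n-k)))$. This is what drives the ratio bound $m(k)/m(k+1) \le 2(k+1)d^{(\rho-1)/2}/(n-k)$ used for the last $\epsilon n$ steps and produces the $d^{O(\epsilon n)}$ correction. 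Your proposal does not contain this idea, and without it the end-game does not close.
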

Putting the above theorem together with \cite{EKFN11} proves \cref{conj:LP} for (strong) spectral  graphs.

Recall that by a work of Friedman, a random $d$-regular graph is a $\eps=\frac{2\sqrt{d-1}+o(1)}{d}$-spectral expander with probability $1-1/\poly(n)$\cite{Fri08,Bor19}. 
So, for a sufficiently large value of $d$, we can  count the number perfect matchings in random $d$-regular graphs up to $1\pm\delta$-multiplicatively in time polynomial in $n,1/\delta$.  Furthermore, the above theorem implies that the  Lovasz-Plummer \cref{conj:LP} holds for almost all graphs.

We remark our proof technique can naturally be extended to non-regular expanders where the ratio of maximum to minimum degree is bounded. However, in the following statement we show that if this ratio is unbounded the graph may not even have a single perfect matching.
\begin{theorem}\label{cor:expnoperfect}
For $d \geq 3$, there exists $n_0 > 0$ such that for any $n \geq n_0$, there is a
 $O(1/\sqrt{d})$-spectral expander $G$ on $2n$ vertices that does not have any  perfect matchings.
\end{theorem}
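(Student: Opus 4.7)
The plan is to construct a non-regular $G$ on $2n$ vertices as a vertex bipartition $V = A \sqcup B$ with $|A| = n-1$ and $|B| = n+1$, taking $B$ to be an independent set. Since $B$ carries no internal edges and $|B| > |A|$, removing $A$ leaves $n+1$ isolated vertices (all odd components), so $o(G - A) > |A|$ and Tutte's theorem forces $G$ to have no perfect matching.

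To determine the degrees, note that for any independent set $I$ the Rayleigh quotient of $\tilde A_G$ on the test vector $D^{1/2} \mathbf 1_I$ gives the bound $\text{vol}(I)/\text{vol}(G) \le \sigma_2(\tilde A_G)/(1 + \sigma_2(\tilde A_G))$. Applied to $I = B$, this forces each $B$-vertex to have degree $d' = \Theta(\sqrt d)$ for $\sigma_2 = O(1/\sqrt d)$ to be achievable. I then set each $A$-vertex to have total degree $d$, split as $\beta = d'(n+1)/(n-1) \approx \sqrt d$ neighbors in $B$ and $\alpha = d - \beta \approx d$ neighbors in $A$; the intra-$A$ graph is chosen as an $\alpha$-regular Ramanujan-like spectral expander on $n-1$ vertices, and the bipartite $AB$-graph as a biregular spectral expander with side-degrees $(\beta, d')$ (for instance a random biregular bipartite graph, which is a spectral expander with high probability for large $n$). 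The resulting max/min degree ratio is $d/d' = \Theta(\sqrt d)$, unbounded in $d$ as required.

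To upper-bound $\sigma_2(\tilde A_G)$, I would decompose any test vector $g \perp \pi$ (where $\pi$ is the top eigenvector of $\tilde A_G$) into a \emph{coarse} component $g_c$ that is constant on each of $A$ and on $B$, plus a \emph{fine} component $g_f$ of zero mean on each side. On the 2-dimensional coarse subspace $\tilde A_G$ restricts to a $2\times 2$ matrix whose non-top eigenvalue works out to $O(1/\sqrt d)$ by direct computation with $\alpha, \beta, d'$. On the fine subspace, the intra-$A$ block contributes a Rayleigh bound of $O(1/\sqrt d)$ by spectral expansion of the intra-$A$ graph (after normalization by $d$), the intra-$B$ block contributes nothing ($B$ is independent), and the cross term between the $A$- and $B$-parts of $g_f$ is bounded by the second singular value of the $AB$-bipartite block, which is $O(1/\sqrt d)$ for the chosen biregular bipartite expander.

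The main technical obstacle is that the \emph{top} singular value of the $AB$-block in the $\tilde A_G$-normalization is $\sqrt{\beta/d} = \Theta(d^{-1/4})$, which at first glance is larger than the target $O(1/\sqrt d)$. The crux is that this large-singular-value mode pairs with the top mode of the intra-$A$ block inside the 2D coarse subspace to produce the single overall top eigenvalue $1$ of $\tilde A_G$, and the residual second eigenvalue of that $2\times 2$ block is $O(1/\sqrt d)$ precisely because the choice $d' = \Theta(\sqrt d)$ saturates the Hoffman-type independent-set bound. Carrying out this $2 \times 2$ cancellation carefully, together with confirming the existence of the requisite intra-$A$ and $AB$-biregular spectral expanders for all $d \ge 3$ and sufficiently large $n$, is the main work.
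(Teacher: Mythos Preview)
Your approach is essentially correct: the coarse/fine decomposition does block-diagonalize $\tilde A_G$ here (because both the intra-$A$ graph and the $AB$-graph are biregular, so constants on each side are genuine eigenvectors), the $2\times 2$ coarse block has eigenvalues $1$ and $-\beta/d=O(1/\sqrt d)$, and on the fine subspace the intra-$A$ and $AB$ pieces each contribute $O(1/\sqrt d)$ when the building blocks are near-Ramanujan. The only loose end is existence for \emph{every} large $n$: the biregularity constraint $(n-1)\beta=(n+1)d'$ with $\beta,d'=\Theta(\sqrt d)$ fails for generic $n$, so you would need to allow small degree defects (or a different choice of $|A|,|B|$) and check the perturbation is harmless. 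This is a fixable technicality, not a real gap.

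The paper's proof is dramatically simpler and takes a completely different route. It starts from a $d$-regular near-Ramanujan graph on $2n-2$ vertices (Friedman) and attaches two new degree-$1$ vertices to a single existing vertex. The two pendants share their unique neighbor, so no perfect matching exists. For the spectrum, the paper just computes $\|\tilde A_G-\tilde A_H\|_F^2\le 5/d$ and invokes Hoffman--Wielandt to conclude every eigenvalue moves by at most $\sqrt{5/d}$, giving $\sigma_2(\tilde A_H)=O(1/\sqrt d)$ in one line. There is no Tutte, no biregular expander, no coarse/fine analysis, and no divisibility issue.

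What each buys: the paper's argument is a two-line perturbation trick with nothing to verify beyond Friedman's theorem. Your construction is more work but yields a qualitatively sharper example: your max/min degree ratio is $\Theta(\sqrt d)$, whereas the paper's is $d+2$ (pendants of degree $1$). So your route actually proves a slightly stronger statement about how mild the irregularity can be while still killing all perfect matchings.
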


\subsection{Related Works}

Bollab\'as and McKay \cite{BM86} showed that when  $d = O(\log^{1/3} n)$, as $d \rightarrow \infty$, a random $d$-regular graph on $2n$ vertices contains $\Omega(d)^n$ many perfect matchings with probability $1-O(1/d^2)$. 
Note that \cref{thm:lower-bound} implies that this statement is true with probability $1-1/\poly(n)$ even if $d = \omega(\log^{1/3} n)$.

Chudnovsky and Seymour \cite{CS12} proved that any planar cubic graph with no cut edge has at least $2^{n/655978752}$ many perfect matchings. 
Building on \cite{CS12}, Esperet, Kardos, King, Kr\'al, and Norine \cite{EKFN11} showed that any $d$-regular $d-1$ edge connected graph has at least $2^{(1-3/d)\frac{n}{3656}}$ perfect matchings.
Barvinok \cite{Bar13} showed that any $3$-regular graph in which  any set $S$ with $2\leq |S|\leq |V|-2$ satisfies $|E(S,\overline{S})|\geq 4$ has at least $c^n$ many perfect matchings for some universal constant $c>1$. 

Jerrum and Sinclair \cite{JS89} showed the ratio of perfect to near perfect matchings in {\em bipartite} Erd\"os-R\'eyni  graphs is polynomial in $n$. Thus, one can efficiently sample a perfect matching in such graphs. However, to the best of our knowledge, no such result is known for (non-bipartite) random  graphs.

Barvinok \cite{Bar99} designed a randomized $c^n$ approximation algorithm to the number of perfect matchings of any (general) graph, for some universal constant $c>1$.
Rudelson, Samarodnitsky, Zeitouni \cite{RSZ16} showed that for a family of strong expander graphs Barvinok's estimator \cite{Bar99}  has a sub-exponential variance, thus obtaining a randomized polynomial time sub-exponential approximation algorithm for the number of perfect matchings of any such graphs.

Gamarnik and Katz \cite{GK10} designed a {\em deterministic} $(1+\epsilon)^n$ approximation algorithm to the number of perfect matching in expanding {\em bipartite} graphs.

\subsection{Overview of Approach}
At high-level our proof builds on works of \cite{JV96,GK10}. We show that given a non-perfect matching $M$ in a (strong) expander graph $G$, one can find many augmenting paths of length $O(\log \frac{n}{n-|M|})$. 
\begin{restatable}{lemma}{nearpertoper}\label{lem:main}
Let $G$ be a $d$-regular $\epsilon$-spectral expander graph on $2n$ vertices for $\eps\leq 1/11$, and let $M$ be any (not perfect) matching in $G$. Then there exist at least $\lceil(n-|M|)/2\rceil$ augmenting paths in $G$ of length at most $\rho = O\left(\max\left(\log_{1/\epsilon}(\frac{2\eps n}{n-|M|}),1\right)\right)$ for $\rho$ defined in \cref{lem:ULURmain}.
\end{restatable}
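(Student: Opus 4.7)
The plan is to use the spectral control over alternating walks provided by \cref{lem:ULURmain} to produce many short alternating walks between pairs of unsaturated vertices, and then extract genuine augmenting paths from these walks.

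Let $U$ denote the set of unsaturated vertices, so $|U| = 2k$ where $k = n - |M|$. Alternating walks starting at $U$ can be viewed, via a standard bipartite lift into sets $U_L, U_R$ (whose naming strongly suggests this is exactly the object of \cref{lem:ULURmain}), as ordinary walks in a natural weighted operator. The $\epsilon$-spectral expansion of $G$ should translate into an $O(\epsilon)$-type bound on the nontrivial spectrum of this operator; then the image of $U$ under $t$ alternating steps grows in $\ell_2$-mass by a factor of roughly $1/\epsilon$ per step until it approximately covers $V$, which happens once $t \geq c \log_{1/\epsilon}(n/|U|)$. Taking $\rho$ as in the statement pushes us past this saturation point, so many length-$\leq \rho$ alternating walks have both endpoints in $U$.

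The crux is extracting augmenting paths from these walks. In the bipartite case shortcutting is automatic; here an alternating walk may traverse odd alternating cycles which break the alternating pattern under naive surgery. My plan is to apply the standard surgery iteratively: whenever the walk first revisits a vertex, excise the enclosed subwalk---if it is an even alternating cycle the walk is simply shortened in place; if it is an odd alternating cycle, reroute one half of the surrounding walk to obtain a strictly shorter alternating walk still between two (possibly different) vertices of $U$. Since each iteration strictly decreases the length, the process terminates in an alternating simple path of length $\leq \rho$ with both endpoints in $U$, which is an augmenting path.

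Finally, to extract $\lceil k/2 \rceil$ augmenting paths, form the auxiliary graph $H$ on vertex set $U$ whose edges are the augmenting paths produced above. The walk-count lower bound from step one translates, after accounting for the surgery, into $|E(H)| = \Omega(|U|)$, from which a greedy matching argument produces at least $\lceil |U|/4 \rceil = \lceil k/2 \rceil$ vertex-disjoint augmenting paths. The main obstacle I anticipate is the odd-cycle handling in the surgery step: odd alternating cycles are precisely why classical augmenting-path arguments break in general graphs, and careful bookkeeping is required to ensure the reroute preserves both the endpoint-in-$U$ property and the length bound $\rho$, while still leaving enough variety in the resulting paths to reach the count $\lceil k/2 \rceil$.
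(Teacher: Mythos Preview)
You have misread what \cref{lem:ULURmain} gives you. It is not a spectral statement about an alternating-walk operator; it says directly that for \emph{any} equipartition $U=U_L\cup U_R$ of the unsaturated vertices there already exists an augmenting \emph{path} (not merely a walk) of length at most $\rho$ from $U_L$ to $U_R$. With that in hand the paper's proof of \cref{lem:main} is a three-line pigeonhole: let $U'\subseteq U$ be the unsaturated vertices that are not an endpoint of any augmenting path of length $\le\rho$. If $|U'|\le n-|M|=|U|/2$, then at least $|U|/2$ vertices of $U$ are endpoints of such paths, giving $\ge\lceil(n-|M|)/2\rceil$ paths. Otherwise $|U'|>|U|/2$, so one can choose $U_L\subseteq U'$ with $|U_L|=|U|/2$ and apply \cref{lem:ULURmain} to get a short augmenting path with an endpoint in $U_L\subseteq U'$, a contradiction. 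No auxiliary graph $H$, no greedy matching, no walk-to-path surgery is needed at this layer.

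The genuine gap in your plan is the odd-cycle surgery. Your proposed rule---``if the excised subwalk is an odd alternating cycle, reroute one half of the surrounding walk''---does not work: an odd alternating cycle in an alternating walk is a blossom, and there is no local reroute that preserves both the alternating property and the endpoints-in-$U$ property while shortening the walk. This is exactly the obstruction you name in your last sentence, and you do not overcome it. The paper (inside the proof of \cref{lem:ULURmain}) sidesteps it by a different idea you are missing: a \emph{random bipartition} $G_M(\omega)$, obtained by independently orienting each matching edge into sides $L,R$ and placing $U_L\subseteq L$, $U_R\subseteq R$. One then grows alternating BFS layers from $U_L$ entirely within this bipartite subgraph; Hoeffding-type concentration shows the layers still expand by a factor $\Theta(1/\epsilon)$ with probability $>1/2$, so after $O(\log_{1/\epsilon}(\epsilon n/|U_L|))$ steps the reachable set exceeds $2\epsilon n$. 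The same holds from $U_R$ on the $R$ side, and the expander mixing lemma then supplies a crossing edge, yielding an alternating walk in $G_M(\omega)$. Because $G_M(\omega)$ is bipartite, every cycle in this walk is even and can be excised trivially, producing a genuine augmenting path. Your spectral-operator picture and ad hoc surgery are replaced by this randomized two-coloring, which is the key idea.
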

As alluded to in the introduction, the main difficulty in proving the above theorem is that since $G$ is not necessarily bipartite, an augmenting walk cannot necessarily be turned into an augmenting path since it may have odd cycles. To avoid this issue, first we construct a random bi-partitioning of the vertices of $G$ by placing the endpoints of each edge of $M$ on opposite sides. We exploit the expansion property of $G$ to argue that, 
under this random bi-partition, every set expands with high probability. So, one can start from two unsaturated vertices and follow  ``alternating BFS trees'' from each until getting to a common middle point. The expansion property allows us to show that, with high probability, after $\log_{1/\eps} n$ steps we can construct an augmenting path. 
 This method essentially tries to mimic the approach of \cite{JV96} while exploiting the random partitioning.
  As an immediate corollary of the above lemma, we can upper bound the ratio of $k$ to $k+1$ matchings in expanders. 
\begin{restatable}{lemma}{ratio}\label{lem:ratio}
Let $G$ be a $d$-regular $\epsilon$-spectral expander graph on $2n$ vertices, and let $k\in [n]$. Let $m(j)$ denote the number of matchings of size $j$ in $G$. Then we have
\[\frac{m(k)}{m(k+1)}\leq \frac{2(k+1)}{n-k}d^{(\rho-1)/2} 
\]
for $\rho$ defined in \cref{lem:ULURmain}. 
\end{restatable}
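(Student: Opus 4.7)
The plan is a standard double-counting argument between $k$-matchings and $(k{+}1)$-matchings, now enabled by the short augmenting paths supplied by \cref{lem:main}. I would form an auxiliary bipartite multigraph $H$ whose left vertices are the $k$-matchings of $G$, whose right vertices are the $(k{+}1)$-matchings, and in which $M$ is joined to $M'$ exactly when the symmetric difference $M\triangle M'$ is a simple augmenting path for $M$ of odd length at most $\rho$. Each edge of $H$ is determined by the pair $(M,M')$ alone, since $P = M\triangle M'$ recovers the path.

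The lower bound on $|E(H)|$ is immediate from \cref{lem:main}: every $k$-matching $M$ admits at least $\lceil(n-k)/2\rceil \geq (n-k)/2$ augmenting paths of length at most $\rho$, and distinct augmenting paths $P\neq P'$ for $M$ yield distinct $(k{+}1)$-matchings $M\triangle P \neq M\triangle P'$. Hence
\[|E(H)| \;\geq\; m(k)\cdot \frac{n-k}{2}.\]

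For the matching upper bound, I would fix a $(k{+}1)$-matching $M'$ and count its $H$-neighbors. Any neighbor $M$ corresponds to a unique path $P = M\triangle M'$ of odd length $2\ell+1$ with $0\leq \ell \leq (\rho-1)/2$, which, when read from the viewpoint of $M'$, alternates $M'$-edge, non-$M'$-edge, $M'$-edge, $\dots$, ending with an $M'$-edge. Such paths are enumerated by choosing the starting $M'$-edge ($k{+}1$ unordered choices, the factor of $2$ from path orientation cancelling with the factor of $2$ from picking which endpoint is ``first'') and then extending: at each non-$M'$ step there are at most $d-1$ choices (avoid the previous vertex), while each subsequent $M'$-step is forced. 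Summing the resulting geometric series $\sum_{\ell=0}^{(\rho-1)/2} (d-1)^\ell$ bounds the number of candidate paths per $M'$ by $(k+1)\cdot d^{(\rho-1)/2}$ up to a harmless constant absorbed into the stated factor of $2$.

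Combining the two bounds yields $m(k)\cdot (n-k)/2 \leq m(k+1)\cdot (k+1)\cdot d^{(\rho-1)/2}$ (with the clean constant coming out of the geometric-series bookkeeping), which rearranges to the claimed inequality. Since all of the structural content lives in \cref{lem:main}, there is no real obstacle here; the only thing to handle carefully is the path enumeration, namely avoiding double-counting over the two orientations of each simple path and ensuring that the $\sum_{\ell}(d-1)^\ell$ collapses to the clean factor $d^{(\rho-1)/2}$ rather than something like $d^{\rho/2}$.
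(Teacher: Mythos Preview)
Your proposal is correct and is essentially the same double-counting argument the paper gives: use \cref{lem:main} to lower-bound the $H$-degree of each $k$-matching by $(n-k)/2$, and upper-bound the $H$-degree of each $(k{+}1)$-matching by enumerating alternating paths that start and end with an $M'$-edge. The paper's writeup is terser (it does not name $H$ and jumps straight to ``at most $2(k+1)d^{(\rho-1)/2}$ $k$-matchings map to $M'$''), but the content is identical; your only wobble is the sentence about ``the factor of $2$ from path orientation cancelling with the factor of $2$ from picking which endpoint is first''---those are the same factor, not two that cancel---though the arithmetic you land on is fine once you fold the geometric sum $\sum_{\ell\le(\rho-1)/2}(d-1)^\ell$ into $d^{(\rho-1)/2}$ via the ``$d-1$ continue $+$ $1$ stop $=d$'' encoding.
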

Building on \cite{JS89}, this lemma is already enough to prove \cref{thm:alg}.

To prove \cref{thm:lower-bound}, we first show that for some constant $\eps>0$, $G$ has at least $\Omega(d)^n$ many $n(1-\eps)$-matchings. This part uses a greedy algorithm to find so many distinct matchings in an expander graph. Then, we exploit the above lemma to argue that the ratio of the number of $n(1-\eps)$ matchings of $G$ to the number of its perfect matchings is at most $d^{O(\eps) n}$.

\section{Preliminaries}


Given a graph $G = (V, E)$ with $|V| = 2n$ and $k\in [n]$, a $k$-matching $M\subseteq E$ is any subset with $|M| = k$ and $e\cap e' = \emptyset$ for all $e\neq e'\in M$. 
For a set $S\subseteq V$, we write $G[S]$ to denote the {\em induced} subgraph on the set $S$. For a vertex $v\in V$, we write $\deg_G(v)$ to denote the degree of $v$ in $G$.

Given a set of vertices $S\subseteq V$, define
$$ M(S):=\{v: \exists u\in S, (u,v)\in M\}.$$
We also define $m_G(k)$ to denote the number of $k$-matchings in $G$. 

Given a matching $M$, a walk $v_0,v_1,\dots,v_k$ is an \emph{alternating walk} for $M$ if for any $1\leq i\leq k-1$ exactly one of $(v_{i-1},v_i)$ and $(v_i,v_{i+1})$ is in $M$.
An \emph{augmenting path} for $M$ is any alternating path that starts and ends with an unmatched vertex.

For a graph $G=(V,E)$ and $S,T\subseteq V$, 
$$E_G(S, T):=\{(u,v)\in  S\times T : (u,v)\in E\}.$$ 
For a set $S\subseteq V$, we write 
$$N_G(S):=\{u\notin S: \exists u\in S, (u,v)\in E\}$$ 
to denote the set of all vertices outside $S$ that has an edge to $S$. When the graph $G$ is unambiguous from the context, we may drop the subscripts.

\subsection{Spectral Graph Theory}

The following facts are the main properties of spectral expanders that we will need.

\begin{fact}[Expander Mixing Lemma]
\label{fact:expander-mixing-lemma}
Let $G$ be a $d$-regular graph on $2n$ vertices. Then for any two sets $S,T\subseteq V$, we have
$$\Big| |E(S,T)| - \frac{|S|\cdot |T|}{2n}\Big| \leq d \sigma_2(\tilde{A})\sqrt{|S|\cdot |T|}$$
\end{fact}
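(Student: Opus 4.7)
The plan is to give the standard spectral proof of the expander mixing lemma, adapted to the normalization used in this paper (where $\sigma_2$ refers to eigenvalues of the normalized adjacency matrix $\tilde A$). Since $G$ is $d$-regular, $D = dI$ and so $\tilde A = A/d$; consequently the second-largest-in-absolute-value eigenvalue of $A$ itself equals $d\,\sigma_2(\tilde A)$. The target inequality is therefore $\bigl||E(S,T)| - d|S||T|/(2n)\bigr| \le \sigma_2(A)\sqrt{|S||T|}$, which is the classical form.

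First I would write $|E(S,T)| = \mathbf{1}_S^\top A\, \mathbf{1}_T$, where $\mathbf{1}_S, \mathbf{1}_T \in \{0,1\}^{2n}$ are the indicator vectors. Next, since $G$ is $d$-regular, the all-ones vector $\mathbf{1}$ is the top eigenvector of $A$ with eigenvalue $d$. I would decompose $\mathbf{1}_S = \alpha \mathbf{1} + u$ and $\mathbf{1}_T = \beta \mathbf{1} + v$, where $\alpha = |S|/(2n)$, $\beta = |T|/(2n)$, and $u,v$ are orthogonal to $\mathbf{1}$. By construction $\|u\|^2 = |S| - |S|^2/(2n) \le |S|$ and similarly $\|v\|^2 \le |T|$.

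Plugging into the quadratic form and using $A\mathbf{1} = d\mathbf{1}$ together with orthogonality, the cross-terms vanish and we obtain
\[
\mathbf{1}_S^\top A\, \mathbf{1}_T \;=\; \alpha\beta\, \mathbf{1}^\top A\, \mathbf{1} \;+\; u^\top A v \;=\; \alpha\beta \cdot 2nd \;+\; u^\top A v \;=\; \frac{d\,|S|\,|T|}{2n} + u^\top A v.
\]
It then remains to bound $|u^\top A v|$. Since $u,v$ both lie in the orthogonal complement of the top eigenvector, the action of $A$ on this subspace is controlled by its second largest singular value, which equals $\sigma_2(A) = d\,\sigma_2(\tilde A)$. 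By Cauchy–Schwarz,
\[
|u^\top A v| \;\le\; \sigma_2(A)\, \|u\|\, \|v\| \;\le\; d\,\sigma_2(\tilde A)\sqrt{|S|\cdot|T|}.
\]
Rearranging yields the desired inequality.

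I do not expect any real obstacle here: the only subtle point is the translation between the adjacency matrix and the normalized adjacency matrix, which is trivial in the $d$-regular case. The proof is short enough that the author may choose to cite it rather than reproduce it.
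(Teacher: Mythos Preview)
Your proof is correct and is exactly the standard spectral argument; in the paper this statement is labeled a \emph{Fact} and given no proof at all, precisely as you anticipated in your last sentence. One small observation: your derivation yields the term $\frac{d|S||T|}{2n}$, whereas the paper's displayed statement has $\frac{|S||T|}{2n}$ without the factor $d$; the applications later in the paper (e.g.\ in the proofs of \cref{claim:average-degree} and \cref{lem:ULURmain}) use the version with the $d$, so you have in fact written down the intended inequality and the paper's statement contains a typo.
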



\begin{lemma}
\label{claim:average-degree}
	Let $G = (V, E)$ be a $2n$-vertex $d$-regular $\epsilon$-expander, and let $S \subseteq V$. The following holds:
	Then, there exists $v \in S$ such that $\deg_{G[S]}(v) \geq \lceil d (|S|/2n-\eps)\rceil$.
\end{lemma}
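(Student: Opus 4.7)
The plan is to apply the expander mixing lemma (\cref{fact:expander-mixing-lemma}) to the pair $(S,S)$ and then convert the resulting edge-count bound into a statement about degrees via an averaging argument.

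First, I would invoke \cref{fact:expander-mixing-lemma} with $T = S$ and use $\sigma_2(\tilde A) \leq \epsilon$ to obtain
\[
|E_G(S,S)| \;\geq\; \frac{d|S|^2}{2n} - d\epsilon |S| \;=\; d|S|\bigl(|S|/(2n) - \epsilon\bigr).
\]
Because the definition $E_G(S,T) = \{(u,v)\in S\times T : (u,v)\in E\}$ counts ordered pairs, every edge of the induced subgraph $G[S]$ contributes two such pairs to $E_G(S,S)$, so $|E_G(S,S)| = \sum_{v\in S}\deg_{G[S]}(v)$. Dividing through by $|S|$ then shows that the average of $\deg_{G[S]}(v)$ over $v\in S$ is at least $d(|S|/(2n) - \epsilon)$.

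Next, by a standard averaging argument some vertex $v\in S$ attains at least this average value, and since induced degrees are integers I can round the lower bound up to $\lceil d(|S|/(2n) - \epsilon)\rceil$, matching the claim exactly.

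The main points requiring care here are conventional rather than conceptual: one must interpret $|E(S,S)|$ as counting ordered pairs (so that it equals $\sum_{v\in S}\deg_{G[S]}(v)$ rather than $|E(G[S])|$), and one must carry the factor of $d$ correctly through the ``expected'' term of the expander mixing lemma so that the bound on average induced degree has the right scaling. Beyond those bookkeeping details, I do not anticipate any real obstacle; the lemma is essentially a one-line corollary of \cref{fact:expander-mixing-lemma}.
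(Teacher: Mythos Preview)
Your proposal is correct and follows exactly the paper's own argument: apply the expander mixing lemma with $T=S$, divide by $|S|$ to bound the average degree in $G[S]$, and pick a vertex achieving at least the average. Your additional remarks on the ordered-pair convention for $E(S,S)$ and on rounding up to an integer are just making explicit what the paper leaves implicit.
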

\begin{proof}
By the Expander Mixing Lemma (\cref{fact:expander-mixing-lemma}), we have $|E(S, S)|\geq \frac{d|S|^2}{2n} - d\epsilon |S|$. 
Hence the average degree of the vertices in $G[S]$ is at least $d (|S|/2n-\eps)$, and in particular there exists $v \in S$ whose degree in $G[S]$ is at least that much.
\end{proof}

\begin{lemma}[\cite{tanner1984explicit}]\label{lemma:tanner}
	Let $G$ be a $d$-regular $\eps$-expander on $2n$ vertices. Then for any $S\subseteq V$ we have
	\[|N(S)|\geq \frac{|S|}{\eps^2 + (1-\eps^2)|S|/2n}\]
\end{lemma}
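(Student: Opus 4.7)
The plan is to run Tanner's original spectral argument: apply $\tilde{A}$ to the indicator of $S$, bound its $\ell_{2}$ norm using the spectral gap, and convert this into a lower bound on the size of its support via Cauchy--Schwarz. Concretely, I would set $N=2n$, let $x\in\{0,1\}^{V}$ be the indicator vector of $S$, and decompose $x=\alpha\mathbf{1}+x^{\perp}$, where $\mathbf{1}$ is the all-ones vector (the top eigenvector of $\tilde{A}$, with eigenvalue $1$, since $G$ is $d$-regular), $\alpha=|S|/N$, and $x^{\perp}\perp\mathbf{1}$. An elementary calculation then gives $\|x^{\perp}\|^{2}=|S|(1-|S|/N)$, and the spectral expansion hypothesis yields $\|\tilde{A}x^{\perp}\|\leq \eps\,\|x^{\perp}\|$. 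Putting these bounds together would give
\[
\|\tilde{A}x\|^{2}=\alpha^{2}N+\|\tilde{A}x^{\perp}\|^{2}\leq \frac{|S|^{2}}{N}+\eps^{2}|S|\Bigl(1-\frac{|S|}{N}\Bigr).
\]

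Next I would observe that $\tilde{A}x$ is supported on $T:=N(S)\cup S$: any vertex $u\notin T$ has no neighbor in $S$, so $(\tilde{A}x)_{u}=0$. A direct computation also gives
\[
\langle \tilde{A}x,\mathbf{1}\rangle=\langle x,\tilde{A}\mathbf{1}\rangle=\langle x,\mathbf{1}\rangle=|S|,
\]
and since $\tilde{A}x$ vanishes off $T$, this inner product equals $\langle \tilde{A}x,\mathbf{1}_{T}\rangle$, where $\mathbf{1}_{T}$ is the indicator of $T$. Cauchy--Schwarz then yields $|S|\leq \|\tilde{A}x\|\cdot\sqrt{|T|}$.

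Combining the two estimates would give
\[
|T|\geq \frac{|S|^{2}}{\|\tilde{A}x\|^{2}}\geq \frac{|S|}{\eps^{2}+(1-\eps^{2})|S|/N},
\]
which is the desired bound. The whole argument is short and I do not expect any serious obstacle; the only slightly subtle step is the support observation that $\tilde{A}x$ vanishes outside the closed neighborhood $T$, since this is what turns a purely spectral norm estimate into a combinatorial lower bound on the size of $N(S)$.
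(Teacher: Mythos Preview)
Your argument is correct and is exactly Tanner's original spectral proof; the paper does not give its own proof of this lemma but simply cites \cite{tanner1984explicit}, so there is nothing to compare against beyond the classical argument you reproduce.

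One small point worth flagging: what your Cauchy--Schwarz step actually bounds is $|T|=|N(S)\cup S|$, the closed neighborhood (which contains the support of $\tilde{A}x$), whereas the lemma as literally written, together with the paper's convention $N(S)\subseteq V\setminus S$, asks for $|N(S)|$ alone. That literal reading is in fact too strong (e.g.\ take $S=V$), so the intended statement is the standard Tanner bound on the closed neighborhood that you prove; this is a minor imprecision in the paper's phrasing rather than a gap in your argument.
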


When $|S|\leq 2\eps n$, we immediately get the following corollary.

\begin{corollary}\label{claim:expander-properties}
	Let $G$ be a $d$-regular $\eps$-expander on $2n$ vertices. Then for any $S\subseteq V$ with $|S|\leq 2\epsilon n$ we have
	\[|N(S)|\geq \frac{|S|}{\eps^2 + \eps -\eps^3}\geq \frac{|S|}{\eps^2 + \eps}.\]

\end{corollary}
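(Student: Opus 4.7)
The plan is to derive this as a direct consequence of Tanner's inequality (\cref{lemma:tanner}). Tanner gives the bound
\[|N(S)|\geq \frac{|S|}{\eps^2+(1-\eps^2)\,|S|/(2n)},\]
and the denominator is a monotonically increasing function of $|S|$. Thus I would just plug in the worst-case value of $|S|$ permitted by the hypothesis and simplify.

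Concretely, first I would use the assumption $|S|\leq 2\eps n$, which is equivalent to $|S|/(2n)\leq \eps$. Substituting this upper bound into the denominator of Tanner's inequality yields
\[\eps^2+(1-\eps^2)\,\tfrac{|S|}{2n}\;\leq\;\eps^2+(1-\eps^2)\eps\;=\;\eps^2+\eps-\eps^3.\]
Since the reciprocal inequality flips, this immediately gives the first bound $|N(S)|\geq |S|/(\eps^2+\eps-\eps^3)$.

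For the second (weaker) bound, I would simply note that $\eps^3>0$, so $\eps^2+\eps-\eps^3<\eps^2+\eps$, and therefore $|S|/(\eps^2+\eps-\eps^3)\geq |S|/(\eps^2+\eps)$, completing the chain of inequalities in the statement.

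There is no real obstacle here: the corollary is a one-line substitution into \cref{lemma:tanner}, combined with the trivial observation that dropping the $-\eps^3$ term only weakens the denominator. The only thing worth double-checking is the direction of monotonicity in the denominator (it is increasing in $|S|$ because $1-\eps^2\geq 0$ for $\eps<1$), which justifies substituting the upper bound $|S|/(2n)\leq \eps$ to get an upper bound on the denominator and hence a lower bound on $|N(S)|$.
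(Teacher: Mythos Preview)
Your proposal is correct and matches the paper's approach exactly: the paper states this corollary immediately after \cref{lemma:tanner} with only the remark ``When $|S|\leq 2\eps n$, we immediately get the following corollary,'' which is precisely the substitution you describe.
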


\subsection{Inequalities}
\begin{theorem}[Hoeffding's Inequality]\label{thm:hoeffding}
Let $X_1,\dots,X_k$ be independent random variables in the range $[0,1]$. Then,
	$$ \P{\sum X_i < \mathbb{E} \sum X_i -\epsilon} \leq \exp(-2\epsilon^2/k).$$
\end{theorem}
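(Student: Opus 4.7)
The plan is to prove Hoeffding's inequality by the classical Chernoff-style moment generating function argument. First I would rewrite the lower-tail event as an upper-tail event on the negated variables: letting $Y_i = \mathbb{E}[X_i] - X_i$, the $Y_i$ are independent, mean zero, and take values in $[-1,1]$ (in fact in an interval of length $1$), and the probability in question equals $\mathbb{P}[\sum_i Y_i > \epsilon]$. Then for any $t > 0$ I would apply the exponential Markov bound
\[
\mathbb{P}\Bigl[\sum_i Y_i > \epsilon\Bigr] = \mathbb{P}\bigl[e^{t\sum_i Y_i} > e^{t\epsilon}\bigr] \leq e^{-t\epsilon}\,\mathbb{E}\bigl[e^{t\sum_i Y_i}\bigr] = e^{-t\epsilon}\prod_{i=1}^k \mathbb{E}\bigl[e^{tY_i}\bigr],
\]
where the last equality uses independence.

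The key ingredient is Hoeffding's lemma: for any mean-zero random variable $Y$ supported on an interval of length $\ell$, one has $\mathbb{E}[e^{tY}] \leq e^{t^2 \ell^2/8}$. I would prove this lemma by convexity: since $s \mapsto e^{ts}$ is convex, for $Y \in [a,b]$ we can bound $e^{tY}$ by the chord $\tfrac{b-Y}{b-a} e^{ta} + \tfrac{Y-a}{b-a} e^{tb}$, take expectations using $\mathbb{E}[Y]=0$, and then analyze the resulting function $\varphi(u) = \log(pe^{-u(1-p)} + (1-p)e^{up})$ where $u = t(b-a)$ and $p = -a/(b-a)$; a direct calculation shows $\varphi(0) = \varphi'(0) = 0$ and $\varphi''(u) \leq 1/4$, whence $\varphi(u) \leq u^2/8$. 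Applied to our $Y_i$ with $\ell=1$, this yields $\mathbb{E}[e^{tY_i}] \leq e^{t^2/8}$.

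Combining, the tail probability is at most $\exp(-t\epsilon + kt^2/8)$ for every $t > 0$. Optimizing, the right-hand side is minimized at $t = 4\epsilon/k$, giving $\exp(-2\epsilon^2/k)$, which matches the claimed bound.

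The only nontrivial step is the second-derivative bound $\varphi''(u) \leq 1/4$ in Hoeffding's lemma; this is elementary but is the one place where an actual calculation rather than a formal manipulation is required. Everything else (the Chernoff reduction, independence factorization, and the final optimization in $t$) is routine.
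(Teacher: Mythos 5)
Your proof is correct, and it is the standard textbook argument (Chernoff exponential-Markov bound, factorization by independence, Hoeffding's lemma via the chord/convexity trick and the $\varphi''\leq 1/4$ bound, then optimization in $t$). Note, however, that the paper does not actually prove this statement---it is recorded in the Preliminaries as a well-known inequality and used off the shelf---so there is no in-paper proof to compare against; your derivation fills in the classical proof correctly, including the correct optimizer $t=4\epsilon/k$ yielding the exponent $-2\epsilon^2/k$.
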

\begin{theorem}[Stirling's Formula]
\label{thm:stirling}
For $n \geq 1$ we have
\[
n! \geq \left(\frac{n}{e}\right)^n.
\]
\end{theorem}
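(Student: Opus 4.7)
The plan is to establish $n! \geq (n/e)^n$ by comparing $n!$ with a single term of the power series for $e^n$. The exponential function satisfies $e^x = \sum_{k=0}^\infty x^k/k!$, and for $x = n \geq 0$ every summand is nonnegative, so dropping all but the $k = n$ term gives
\[
e^n \;=\; \sum_{k=0}^\infty \frac{n^k}{k!} \;\geq\; \frac{n^n}{n!}.
\]
Rearranging yields $n! \geq n^n / e^n = (n/e)^n$, which is exactly the claim.

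As a backup or cross-check I would also run a short induction on $n$. The base case $n = 1$ asks for $1 \geq 1/e$, which is immediate. For the inductive step, assuming $n! \geq (n/e)^n$, one writes $(n+1)! = (n+1)\cdot n! \geq (n+1)(n/e)^n$ and checks that this dominates $\bigl((n+1)/e\bigr)^{n+1}$; after cancelling a factor of $(n+1)/e$ on both sides, this reduces to the well-known inequality $(1 + 1/n)^n \leq e$, which follows from the monotone convergence of that sequence to $e$.

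There is no real obstacle here: both approaches are a few lines of routine manipulation and do not depend on any other result in the excerpt. The power-series proof is the more self-contained of the two, since it does not need a separate appeal to the limit definition of $e$, so that is the version I would write up. The inequality is invoked later only as a crude factorial lower bound, so any equivalent form (for example, with an extra $\sqrt{2\pi n}$ factor from the full Stirling expansion) would be overkill; the one-line series argument is the right proof to record.
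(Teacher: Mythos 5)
The paper states \cref{thm:stirling} as a known fact and supplies no proof, so there is no paper argument to compare against. Your power-series proof is correct and self-contained: from $e^n = \sum_{k\geq 0} n^k/k! \geq n^n/n!$ the bound follows immediately, and this is exactly the kind of one-line elementary justification the statement deserves. Your backup induction is also sound (the inductive step reduces, after cancelling $(n+1)/e$, to $(1+1/n)^n \leq e$), though as you say the series argument is the cleaner one to record.
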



\begin{theorem}[{Weierstrass's Inequality}]\label{thm:weierstrass}
	Let $0<x_i<1$ for $1\le i\le n$. Then,
	$$ \prod_{i=1}^n (1-x_i) \geq 1-\sum_{i=1}^n x_i.$$
\end{theorem}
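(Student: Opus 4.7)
The plan is to prove Weierstrass's inequality by induction on $n$, driven by the elementary observation that for $a,b \ge 0$ one has $(1-a)(1-b) = 1 - a - b + ab \ge 1 - a - b$.

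The base case $n=1$ is immediate since $\prod_{i=1}^1(1-x_i) = 1-x_1$, so both sides agree. For the inductive step, I assume the statement holds for $n$ variables and aim to deduce it for $n+1$. Since each $x_i \in (0,1)$, we in particular have $1 - x_{n+1} > 0$, so multiplying the inductive hypothesis $\prod_{i=1}^n(1-x_i) \ge 1 - \sum_{i=1}^n x_i$ by the positive quantity $1-x_{n+1}$ preserves the inequality and yields
$$\prod_{i=1}^{n+1}(1-x_i) \;\ge\; (1-x_{n+1})\left(1 - \sum_{i=1}^n x_i\right) \;=\; 1 - \sum_{i=1}^{n+1} x_i + x_{n+1}\sum_{i=1}^n x_i.$$
Because $x_{n+1} \ge 0$ and $\sum_{i=1}^n x_i \ge 0$, the cross term is nonnegative, so the right-hand side is at least $1 - \sum_{i=1}^{n+1} x_i$, closing the induction.

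There is essentially no obstacle here; the result is a standard textbook inequality and the induction is one line of algebra. The only point worth flagging is that the inductive hypothesis may be vacuous (negative right-hand side) when $\sum_{i=1}^n x_i \ge 1$, but the computation above is purely algebraic and does not rely on the sign of $1 - \sum_{i=1}^n x_i$, so the argument goes through uniformly. An equivalent one-shot presentation would expand $\prod_i (1-x_i)$ as an alternating sum of elementary symmetric polynomials and group positive and negative terms in pairs, but the inductive argument is cleaner and is what I would write up.
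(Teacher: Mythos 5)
Your induction is correct and is the standard textbook proof of this inequality. The paper states Weierstrass's inequality as a known result without providing a proof, so there is nothing to compare against; your argument fills that gap cleanly and the observation that the algebra works regardless of the sign of $1 - \sum_{i=1}^n x_i$ is exactly the right point to flag.
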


\begin{theorem}[Hoffman-Wielandt's Inequality]
\label{thm:hoffman}
	Let $A,B \in \R^{n \times n}$ be symmetric matrices with eigenvalues $\lambda_1 \geq \cdots \geq \lambda_n$ and $\lambda'_1 \geq \cdots \geq \lambda'_n$, respectively. We have
	\[
	\sum_{i=1}^n (\lambda_i - \lambda'_i)^2 \leq \|A - B\|_F^2,
	\]
	where $\|\cdot\|_F$ denotes the Frobenius norm.
\end{theorem}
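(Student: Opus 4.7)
The plan is to reduce the claim to a linear optimization problem over the Birkhoff polytope of doubly stochastic matrices and then finish with the rearrangement inequality.

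First, I would diagonalize both matrices: write $A = U\Lambda U^T$ and $B = V\Lambda' V^T$ where $\Lambda,\Lambda'$ are diagonal matrices carrying the eigenvalues in decreasing order, and $U,V$ are orthogonal. Using the orthogonal invariance of the Frobenius norm (conjugating by $U^T$ on the left and $V$ on the right preserves $\|\cdot\|_F$), I get
$$\|A-B\|_F^2 \;=\; \|U^T(A-B)V\|_F^2 \;=\; \|\Lambda Q - Q\Lambda'\|_F^2 \;=\; \sum_{i,j=1}^n Q_{ij}^2 (\lambda_i - \lambda'_j)^2,$$
where $Q := U^T V$ is orthogonal. The key observation is that the matrix $P$ with entries $P_{ij} := Q_{ij}^2$ is \emph{doubly stochastic}, because the rows and columns of $Q$ are unit vectors: $\sum_j Q_{ij}^2 = \|Q_{i\cdot}\|^2 = 1$ and similarly for columns.

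Second, it now suffices to show that the linear functional $P \mapsto \sum_{i,j} P_{ij}(\lambda_i - \lambda'_j)^2$ is at least $\sum_i (\lambda_i - \lambda'_i)^2$ for every doubly stochastic $P$. Since the functional is linear, its minimum over the Birkhoff polytope is attained at an extreme point, which by the Birkhoff--von Neumann theorem is a permutation matrix $P_\sigma$. Thus the statement reduces to
$$\min_{\sigma \in S_n} \sum_{i=1}^n (\lambda_i - \lambda'_{\sigma(i)})^2 \;=\; \sum_{i=1}^n (\lambda_i - \lambda'_i)^2.$$

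Finally, expanding the squares, this is equivalent to \emph{maximizing} the cross term $\sum_i \lambda_i \lambda'_{\sigma(i)}$ over $\sigma$, since $\sum_i \lambda_i^2$ and $\sum_i (\lambda'_i)^2$ do not depend on $\sigma$. Because both sequences $(\lambda_i)$ and $(\lambda'_i)$ are already sorted in decreasing order, the classical rearrangement inequality guarantees that the cross term is maximized by $\sigma = \mathrm{id}$, closing the argument. The only nontrivial verification along the way is that $Q_{ij}^2$ is doubly stochastic (immediate from orthogonality), and I do not anticipate a serious obstacle: the proof is a clean three-step composition of orthogonal invariance of $\|\cdot\|_F$, Birkhoff--von Neumann, and the rearrangement inequality.
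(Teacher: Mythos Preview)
Your argument is correct and is the standard proof of the Hoffman--Wielandt inequality. Note that the paper does not actually prove this statement: it is listed in the preliminaries as a known inequality (alongside Hoeffding, Stirling, and Weierstrass) and is simply quoted without proof, so there is no ``paper's own proof'' to compare against. Your three-step route (orthogonal invariance of $\|\cdot\|_F$ to reduce to $\sum_{i,j}Q_{ij}^2(\lambda_i-\lambda'_j)^2$, then Birkhoff--von Neumann to pass to permutation matrices, then the rearrangement inequality to identify the identity permutation as optimal) is exactly the classical argument and all steps are sound.
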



\section{Proof of the Main Lemma}
The following lemma is the main result of this section.
\begin{lemma}\label{lem:ULURmain}
Let $G=(V,E)$ be a $d$-regular $\epsilon$-spectral expander graph on $2n$ vertices with $\epsilon\leq 1/11$,  $M$ be any (not perfect) matching in $G$, and  $U$ the set of unsaturated vertices (in $M$).
For any partitioning of $U=U_L \cup U_R$ with $|U_L|=|U_R|$ there is an augmenting path from $U_L$ to $U_R$ of length at most 
$\rho = 4\max\left(\lceil \log_{C_1(\eps)} (\frac{2\epsilon n+1}{n-|M|}) \rceil, 0\right) + 1$, where $C_1(\eps)=\frac{1}{\eps+\epsilon^2}$.  
\end{lemma}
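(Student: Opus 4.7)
Following the overview of the paper and the approach of \cite{JV96}, the plan is to introduce a random bipartition $V=V_L\sqcup V_R$ that makes $M$ a crossing matching, grow two alternating BFS trees inside the resulting bipartite subgraph $G'\subseteq G$, and argue that after at most $2K$ layers they must meet, where $K=\max\{\lceil\log_{C_1(\eps)}((2\eps n+1)/(n-|M|))\rceil,0\}$. Concretely, for each matched pair $\{u,v\}\in M$ flip an independent fair coin to send $u\in V_L, v\in V_R$ or vice versa; place $U_L\subseteq V_L$ and $U_R\subseteq V_R$ as given; and let $G'$ be the subgraph of edges crossing $V_L$ and $V_R$, so that $M\subseteq E(G')$ deterministically. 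Then run alternating BFS in $G'$: tree $T_L$ from $U_L$, tree $T_R$ from $U_R$, adding unmatched $G'$-neighbors at odd layers and their matched partners at even layers. Bipartiteness of $G'$ forces even (resp.\ odd) layers of $T_L$ to lie in $V_L$ (resp.\ $V_R$), and symmetrically for $T_R$; in particular any alternating $G'$-walk from $U_L$ to $U_R$ has odd length and, by the bipartite structure, contains a genuine augmenting path of no greater length, bypassing the blossom issue entirely.

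\textbf{Expansion step.} The key combinatorial input is \cref{claim:expander-properties}: for any $|S|\le 2\eps n$ one has $|N_G(S)|\ge C_1(\eps)|S|$. I would show that, with positive probability over the random bipartition, for \emph{every} BFS layer $L$ of size at most $2\eps n$ that can arise, the next odd layer (via unmatched $G'$-edges into the opposite side) has size at least $C_1(\eps)|L|$. The subsequent matched-partner step preserves size: for any $v\in L_{2k+1}$, its matched partner $m(v)$ is forced into the opposite side by the bipartition and cannot coincide with any earlier tree vertex, since otherwise $v$ itself would have been inserted at a strictly earlier odd layer of the alternating BFS, contradicting novelty. Hence each double-step multiplies the layer size by a factor of at least $C_1(\eps)$ as long as we remain below $2\eps n$.

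\textbf{Meeting and path extraction.} Starting from $|L_0^L|=|U_L|=n-|M|$, either $T_L$ and $T_R$ meet during growth, producing an augmenting path of length at most $4K+1$ by the bipartite argument above, or after $K$ double-steps both $L_{2K}^L\subseteq V_L$ and $L_{2K}^R\subseteq V_R$ are disjoint sets of size at least $2\eps n+1$. The expander mixing lemma (\cref{fact:expander-mixing-lemma}) then yields
\[
|E_G(L_{2K}^L,L_{2K}^R)|\ \ge\ \frac{d(2\eps n+1)^2}{2n}-d\eps(2\eps n+1)\ =\ d\eps+\frac{d}{2n}\ >\ 0,
\]
so there is at least one connecting edge $e$. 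Since every matched edge from a vertex $v\in L_{2K}^L\setminus U_L$ has its other endpoint $m(v)\in L_{2K-1}^L\subseteq T_L$, disjoint from $L_{2K}^R$, the edge $e$ must be unmatched. Splicing the $T_L$-path from $U_L$ to the $L_{2K}^L$-endpoint of $e$, then $e$, then the reversed $T_R$-path from $L_{2K}^R$ back to $U_R$, gives a properly alternating augmenting path of total length $2K+1+2K=4K+1=\rho$.

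\textbf{Main obstacle.} The crux lies in the expansion claim of the second step. A na\"ive ``half of $N_G(S)$ lands in $V_R$'' argument would deliver only $C_1(\eps)/2$, whereas the statement demands the full $C_1(\eps)$. Recovering the exact factor requires carefully exploiting that matched edges deterministically cross the bipartition, while controlling the anti-correlations among endpoints of matched pairs lying in $N_G(S)$; a union bound over all possible BFS frontier sets combined with Hoeffding-type concentration (\cref{thm:hoeffding}) for the random crossing counts should suffice. I expect this tight accounting, together with the requirement that $C_1(\eps)$ be sufficiently large for geometric growth to dominate the losses in the union bound, to be what forces the hypothesis $\eps\le 1/11$.
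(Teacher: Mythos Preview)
Your high-level architecture matches the paper's: an $M$-respecting random bipartition $\omega$, alternating BFS from the unsaturated sides inside $G_M(\omega)$, geometric growth of the even layers until they exceed $2\eps n$, then the expander mixing lemma to find a crossing edge, and finally bipartiteness of $G_M(\omega)$ to turn the resulting alternating walk into an augmenting path. The splicing and parity arguments you give are essentially those in the paper.

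The gap is in the expansion step. A literal union bound over ``all possible BFS frontier sets'' cannot close: there are $\binom{2n}{s}$ candidate sets of size $s$, while Hoeffding only gives failure probability $\exp(-\Theta(s))$, and the product blows up whenever $s\ll n$. The alternative you implicitly need is a \emph{sequential} analysis: condition on $L_0,\dots,L_{i-1}$ and argue the sides of the new neighbors are still fresh coin flips. But the plain BFS you describe reveals the side of \emph{every} vertex in $N_G(L_{i-1})$ at each odd step. When $d\gg 1/\eps$ this is fatal: the accumulated set of ``side-revealed'' vertices has size up to $d\sum_{j<i}|L_j|$, which can swamp $|N_G(L_i)|\approx \alpha|L_i|$ with $\alpha=C_1(\eps)$, leaving no independent randomness for Hoeffding at the next step.

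The paper's key device avoids exactly this: at step $i$ it does \emph{not} look at all of $N_G(L_{i-1})$ but only at the lexicographically first $\alpha|L_{i-1}|-|X_{i-1}|$ neighbors outside the already-explored set $X_{i-1}$, so that $|X_i|=\alpha|L_{i-1}|$ is bounded \emph{independently of $d$}. This deferred-decisions trick is what makes the conditional independence statement (their \cref{claim:independence}) true and what lets Hoeffding apply step by step; the probabilities are then multiplied over the $t$ steps (via Weierstrass), not union-bounded over sets. With this bookkeeping the growth factor one actually proves is $\alpha/4$ per double-step (Claims~\ref{claim:chernoff} and~\ref{claim:expanding-Li}), not the full $C_1(\eps)$ you target; the factor of $4$ is the price of only knowing that roughly a quarter of the freshly revealed $A_i$ lands in $R$, and the paper does not try to recover it. So your ``main obstacle'' is correctly identified, but the resolution is the explore-only-what-expansion-guarantees trick rather than a tighter concentration or a union bound.
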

Before proving this lemma we use it to prove \cref{lem:main}.
\begin{proof}[Proof of \cref{lem:main}]	
Let $U$ be the set of unmatched vertices of $M$ and let $U'$ be 
vertices of $U$ that are not an endpoint to any augmenting path of length at most $\rho$.
Observe that if $|U'|<=n-|M|$, then there are at least $\lceil(n-|M|)/2\rceil$ augmenting paths for $M$ and we are done.

For the sake of contradiction suppose $|U'|>n-|M|$. 
Now arbitrarily partition $U$ into two equal-sized sets $U_L\cup U_R$ (each of size exactly $n-|M|$) with the constraint that $U_L\subseteq U'$. So, by construction, no vertex in $U_L$ is an endpoint of augmenting path of length at most $\rho$.
But, by \cref{lem:ULURmain} there is an augmenting path from $U_L$ to $U_R$ of length at most $\rho$ which is a contradiction.
\end{proof}
\begin{proof}[Proof of \cref{lem:ratio}]
Given a $k$-matching $M$, by \cref{lem:main} there are at least $(n-k)/2$ augmenting paths for $M$ (in $G$) of length at most $\rho$ for $\rho$ defined in \cref{lem:main}.
Note that for any vertex $v$ of $G$ the number of paths of length at most $\rho$ starting at $v$ is at most $d^\rho$.
Therefore, for any $k+1$-matching $M'$, there are at most $2(k+1)d^{(\rho-1)/2}$ $k$-matchings  that can be mapped to $M'$. 
This is because any such matching can be obtained by ``undoing'' an alternating path that starts and ends at the saturated vertices of $M'$.
Together, these imply 
$\frac{m(k)}{m(k+1)}\leq \frac{2(k+1)d^{(\rho-1)/2}}{n-k}$.
\end{proof}


\begin{definition}[Bipartition of $G$]\label{def:bipartition} Given a matching $M$ and $\omega:M\to\{0,1\}$, we define the bipartite graph $G_M(\omega)=(L_M(\omega),R_M(\omega),E_M(\omega))$ as follows. We drop the subscript $M$ and $\omega$ if they are clear in the context.
 
All vertices of $U_L$ are in $L$, all vertices of $U_R$ are in $R$. For any edge $e=(u,v)\in M$, we add $u$ to $L$ and $v$ to $R$ if $\omega(e)=0$ and we add $u$ to $R$ and $v$ to $L$ otherwise.
We simply let $E_M(\omega)$ be all edges of $E$ connecting $L$ to $R$.
We use $\mu_M$ to denote the uniform distribution over functions $M\to\{0,1\}$.
\end{definition}



\begin{lemma}
\label{lem:short-alternating-path}
Let $G=(V,E)$ be a graph with $2n$ vertices 
such that for every set $S\subseteq V$ with $|S|\leq 2\eps n$, $|N(S)|\geq \alpha |S|$ for $\alpha\geq 10$ and $0<\eps<1$. Given a non-perfect matching $M$ and a partition of non-saturated vertices into equal sized sets $U_L,U_R$,
if for $t= \max(\lceil\log_{\alpha/4}{\frac{2\eps n+1}{|U_L|}}\rceil, 0)$ there is no augmenting path of length at most $4t+1$ from $U_L$ to $U_R$, then with probability  $>1/2$ (for $\omega\sim\mu_M$) there exists a set $S\subseteq L$ such that $|S|> 2\eps n$, and for every $v\in S$ there is an alternating path of length at most $2t$ from $U_L$ to $v$ in $G_M(\omega)$.
\end{lemma}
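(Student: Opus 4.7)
My plan is to perform an alternating BFS in the random bipartite graph $G_M(\omega)$ starting from $U_L$, and to show by induction that the $L$-side of this BFS grows by a factor of at least $\alpha/4$ per level, so that after $t$ levels its size exceeds the threshold $2\eps n$. For each $k\ge 0$, let $L_k$ (resp.\ $R_k$) denote the set of vertices of $L_M(\omega)$ (resp.\ $R_M(\omega)$) reachable from $U_L$ in $G_M(\omega)$ via an alternating path of length at most $2k$ (resp.\ $2k+1$). Starting with $L_0=U_L$, the alternating BFS in the bipartite $G_M(\omega)$ satisfies the clean recursion $R_k=N_G(L_k)\cap R_M(\omega)$ and $L_{k+1}=L_k\cup M(R_k)$. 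The hypothesis of no augmenting path of length at most $4t+1$ guarantees $R_k\cap U_R=\emptyset$ for every $k\le 2t$, so every vertex of $R_k$ is matched and $M(R_k)$ is well-defined. A set-theoretic computation using the inclusion $M(L_k\setminus U_L)\subseteq R_k$ then yields the clean identity $|L_{k+1}|=|R_k|+|U_L|$.

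The main inductive claim I would establish is: for each $0\le k\le t$, with probability at least $1-k/(3t)$, either $|L_k|>2\eps n$ or $|L_k|\ge(\alpha/4)^k|U_L|$. The base case $k=0$ is immediate. At $k=t$, the definition of $t$ forces $(\alpha/4)^t|U_L|\ge 2\eps n+1$, so either way $|L_t|>2\eps n$; the union bound over the $t$ inductive steps then gives this with probability at least $2/3>1/2$.

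For the inductive step, assume $|L_k|\le 2\eps n$ (otherwise the goal already holds at level $k$). The expansion hypothesis gives $|N_G(L_k)|\ge\alpha|L_k|$. To show $|R_k|\ge(\alpha/4)|L_k|$ with probability at least $1-1/(3t)$, I would expose $\omega$ via deferred decisions: reveal $\omega(e)$ only when an endpoint of the $M$-edge $e$ first enters the BFS frontier. Under this ordering, the $\omega(e)$'s freshly revealed at level $k$ are independent uniform Bernoullis, even conditional on $L_0,\dots,L_k$. Writing $b_k$ for the set of matched vertices in $N_G(L_k)$ whose $\omega$ is being freshly revealed, the no-augmenting-path assumption rules out unmatched neighbors of $L_k$, while a bookkeeping bound on the previously-revealed vertices forces $|N_G(L_k)\setminus b_k|=O(|L_k|)$; the assumption $\alpha\ge 10$ then gives $|b_k|\ge\tfrac{3}{4}|N_G(L_k)|$. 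Hoeffding's inequality (\cref{thm:hoeffding}) applied to these $|b_k|$ fresh fair coins yields $|b_k\cap R_M(\omega)|\ge(\alpha/4)|L_k|$ with the required probability, and combined with the identity $|L_{k+1}|=|R_k|+|U_L|$ gives $|L_{k+1}|\ge(\alpha/4)^{k+1}|U_L|$.

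The main obstacle is the deferred-decisions bookkeeping: both $L_k$ and the side of any given matched vertex are random variables driven by the same $\omega$, so applying concentration naively to $|N_G(L_k)\cap R_M(\omega)|$ is not justified. The resolution is to reveal $\omega(e)$ only at the moment the BFS first requires it, keeping the fresh coin flips at each level independent of the past. A secondary technical subtlety is controlling the size of the previously-revealed frontier inside $N_G(L_k)$ by $O(|L_k|)$; this is where the explicit assumption $\alpha\ge 10$ is used, in order to absorb that overhead into the $\alpha|L_k|$ expansion bound and still extract the $\alpha/4$ growth factor at each step.
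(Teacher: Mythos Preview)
Your overall architecture matches the paper's: build an alternating BFS from $U_L$ in the random bipartition $G_M(\omega)$, use deferred decisions so that the coins revealed at each level are conditionally fresh, and apply Hoeffding to get geometric growth of the $L$-side. The recursion $R_k=N_G(L_k)\cap R$, $L_{k+1}=L_k\cup M(R_k)$ and the identity $|L_{k+1}|=|R_k|+|U_L|$ are correct, and the use of the no-augmenting-path hypothesis to rule out $U_R$ in $R_k$ is exactly right.

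The gap is the sentence ``a bookkeeping bound on the previously-revealed vertices forces $|N_G(L_k)\setminus b_k|=O(|L_k|)$.'' In your scheme you reveal $\omega(e)$ for \emph{every} matched vertex in $N_G(L_{k-1})$ in order to compute $R_{k-1}$. The lemma has no degree bound, so $|N_G(L_{k-1})|$ is only bounded below by $\alpha|L_{k-1}|$, not above; the previously-revealed set can therefore be much larger than $|L_k|$. Concretely, the revealed vertices in $N_G(L_k)$ that land in $L$ are (up to easy cases) exactly $N_G(L_{k-1})\cap L\setminus L_k$, and $|N_G(L_{k-1})\cap L|=|N_G(L_{k-1})|-|R_{k-1}|$ is not controlled by your one-sided inductive hypothesis $|L_k|\ge(\alpha/4)^k|U_L|$. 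To make your route work you would need a two-sided invariant (e.g.\ also $|N_G(L_{k-1})\cap L|\le C|L_k|$), which is an additional concentration statement you have not stated or proved. The paper sidesteps this entirely by \emph{truncating} the exploration: at step $i$ it looks only at the lexicographically first $\alpha|L_{i-1}|-|X_{i-1}|$ fresh neighbors of $L_{i-1}$, so the cumulative revealed set $X_i$ has size exactly $\alpha|L_{i-1}|$ deterministically, and the bookkeeping $|A_i|\ge(\alpha-4)|L_{i-1}|$ is immediate. This truncation is the missing technical device.

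A secondary issue: your per-step failure bound $1/(3t)$ does not follow from Hoeffding when $k$ is small. At $k=0$ the number of fresh coins is only about $\alpha|U_L|$, giving failure probability roughly $e^{-\alpha/8}\approx 0.29$ when $|U_L|=1$, independent of $t$; for large $n$ this exceeds $1/(3t)$. The paper instead multiplies the step-wise success probabilities (via Weierstrass's inequality) and uses the doubly-exponential decay $\exp(-c(\alpha/4)^{j})$ to show the total failure probability is below $1/2$ precisely when $\alpha\ge 10$. A simple union bound with a uniform per-step error will not close.
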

\begin{proof}[Proof of \cref{lem:ULURmain}] 
First, by \cref{claim:expander-properties}, since $G$ is an $\eps$-spectral expander and $\eps<1/11$, we can let $\alpha=1/(\eps + \eps^2) \geq 10$.
We prove the claim by contradiction. Suppose $G$ has no augmenting path of length $\rho\seteq 4t+1$ from $U_L$ to $U_R$, for $t$ defined in \cref{lem:short-alternating-path}.
By \cref{lem:short-alternating-path}, for $\omega\sim\mu$, with probability $>1/2$ there is a set $S\subseteq L$ with $|S|>2\eps n$, such that for any $v\in S$ there is an alternating path in $G_M(\omega)$ of length (at most) $2t$ from $U_L$ to $v$.
By renaming $U_L,U_R$, with probability  $>1/2$ there also exists another set $S'\subseteq R$ such that $|S'|> 2\epsilon n$ such that for every $v\in S' $, there is an alternating path of length at most $2t$ from $U_R$ to $v$ in $G_M(\omega)$.
By union bound, with positive probability  both of these sets exist.
Now, by \cref{fact:expander-mixing-lemma} we have
\[
|E(S, S')|\geq \frac{d|S|\cdot |S'|}{2n} - \epsilon d\sqrt{|S|\cdot |S'|} > d\sqrt{|S|\cdot |S'|}(\epsilon - \epsilon) = 0.
\]
So there is an edge $(v, v')\in E(S,S')$. Now, the path formed by concatenating an alternating path from $U_L$ to $v$ of length $2t$, the edge $(v,v')$, and an alternating path from $v'$ to $U_R$ of length $2t$  we find alternating walk of length (at most) $\rho=2t + 2t + 1$ from $U_L$ to $U_R$ in $G_M(\omega)$. But since $G_M(\omega)$ is a bipartite graph this walk can only have even length cycles; by removing these cycles we obtain an alternating path of length at most $\rho$ from $U_L$ to $U_R$ (in $G_M(\omega)$).
\end{proof}

In the rest of this section, we prove \cref{lem:short-alternating-path}. First note that as $\alpha / 4 >1$, we have $\log_{\alpha/4}{\frac{2\eps n+1}{|U_L|}} \leq 0$ if and only if $|U_L| > 2\epsilon n$, and the claim is trivial in this case as we can set $S = U_L$. 
Now suppose $|U_L| \leq 2\epsilon n$. Let us fix an arbitrary ordering on the vertices of $G$. Given a bipartition $G_M(\omega)$, we define a sequence of sets $U_L=L_0\subseteq L_1\subseteq \dots L_T \subseteq L$, and $\emptyset= X_0\subseteq X_1\subseteq \dots X_T\subseteq V$, where $T$ is a stopping time which is the minimum of $t$ and the first time that $|L_T|>\eps n$. 
 Given $L_{i-1}, X_{i-1}$ for $i\geq 1$, we construct $L_i,X_i$ as follows: If $|L_{i-1}|>\eps n$ then we stop and we let $T=i-1$. Otherwise, $|L_{i-1}|\leq 2\eps n$ so by assumption of the lemma, $N(L_{i-1})\geq \alpha|L_{i-1}|$. Let $A_i$ be the 
lexicographically first $\alpha |L_{i-1}| - |X_{i-1}|$ neighbors of $L_{i-1}$ which are not in $X_{i-1}$. In other words, we sort all neighbors of $L_{i-1}$ which are not in $X_{i-1}$ lexicographically and we let the first $\alpha |L_{i-1}|-|X_{i-1}|$ of them to be $A_i$.
Note that as $L_{i-1}$ has at least $\alpha|L_{i-1}|$ neighbors, there are at least $\alpha |L_{i-1}| - |X_{i-1}|$ ``new'' neighbors and so the set $A_i$ is well-defined. We let $X_i=X_{i-1}\cup A_i$. Observe that by definition, we always have
\begin{equation}\label{eq:Xisize} |X_i|=\alpha|L_{i-1}|.	
\end{equation}
Finally, we let
$$L_i = L_{i-1} \cup M(A_i\cap R) = L_{i-1}\cup M(X_i\cap R).$$
The following fact follows inductively from the above construction
\begin{fact}
For every $1\leq i\leq T$ and  every $v\in L_i$, there is an alternating path of length at most $2i$ from $U_L=L_0$ to $v$ in $G_M(\omega)$.
\end{fact}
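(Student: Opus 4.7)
The proof proceeds by induction on $i$. For the base case $i=1$, we have $L_0=U_L$ and $X_0=\emptyset$, so $X_1 = A_1 \subseteq N(U_L)$; any $v \in L_1 \setminus U_L$ is matched to some $u \in A_1 \cap R$, and since $u$ is adjacent in $G$ to some $w \in U_L$ (which is unsaturated), the walk $w, u, v$ uses edges $(w,u) \notin M$ and $(u,v) \in M$ and is therefore an alternating path of length $2$; for $v \in U_L$ the trivial length-$0$ path works. For the inductive step, assume the claim for $i-1$ and fix $v \in L_i$. If $v \in L_{i-1}$, the hypothesis immediately supplies a path of length at most $2(i-1) \le 2i$. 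Otherwise $v \in M(A_i\cap R) \setminus L_{i-1}$, so some $u \in A_i \cap R$ satisfies $(u,v)\in M$; since matching edges cross $L$ and $R$ in the bipartition of \cref{def:bipartition}, we have $v \in L$, and since $A_i \subseteq N(L_{i-1})$, we may pick $w \in L_{i-1}$ adjacent to $u$ in $G$. Because $w \in L$ and $u \in R$, the edge $(w,u)$ belongs to $E_M(\omega)$.

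The crucial step is to verify that the two-edge extension $w \to u \to v$, when prepended by the induction hypothesis path reaching $w$, yields an alternating walk. By induction there is an alternating walk $P$ from $U_L$ to $w$ of length at most $2(i-1)$; since $P$ begins at an unsaturated vertex, its first edge (if any) is non-matching, and because $|P|$ is even, its last edge (if any) is a matching edge. The edge $(w,u)$ is not in $M$ because $u$'s unique $M$-partner is $v$, which differs from $w$ (as $v \notin L_{i-1}$ while $w \in L_{i-1}$). Appending $(w,u)$ (non-matching) and then $(u,v)$ (matching) to $P$ therefore produces an alternating walk from $U_L$ to $v$ of length at most $2i$.

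The remaining subtlety, which is where I expect to be most careful, is the distinction between walk and path: the concatenation above might revisit vertices. Here we exploit bipartiteness of $G_M(\omega)$: every closed sub-walk has even length, and any closed \emph{alternating} sub-walk contains equally many matching and non-matching edges alternating around it. Excising such closed sub-walks strictly shortens the walk while preserving both its endpoints and its alternating structure, so after finitely many excisions we obtain an alternating path from $U_L$ to $v$ of length at most $2i$, completing the induction.
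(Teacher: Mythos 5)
Your proof is correct and takes the inductive route the paper itself alludes to (the paper only remarks that the fact ``follows inductively from the above construction'' without supplying details). Your decomposition — split $L_i$ into $L_{i-1}$ and $M(A_i\cap R)\setminus L_{i-1}$, extend the induction-hypothesis path by the two edges $(w,u)$ and $(u,v)$ with $u\in A_i\cap R$ matched to $v$ and $w\in L_{i-1}$ a $G$-neighbor of $u$, then excise even cycles using bipartiteness of $G_M(\omega)$ — is exactly the right one. One step is stated without justification and deserves a sentence: you assert ``$|P|$ is even.'' This does hold, but it is not part of the induction hypothesis, which only gives $|P|\le 2(i-1)$; the reason is that $U_L\subseteq L$ and $L_{i-1}\subseteq L$ (inductively, since $M(A_i\cap R)\subseteq L$), so any path in the bipartite graph $G_M(\omega)$ between two $L$-vertices has even length. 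With that one-line observation the alternation argument (first edge non-matching because $U_L$ is unsaturated, last edge matching by even parity, hence appending the non-matching $(w,u)$ followed by the matching $(u,v)$ preserves alternation) goes through cleanly, and your cycle-excision step, while terse, is sound for the same parity reason: excised cycles in $G_M(\omega)$ have even length, so alternation and parity are preserved.
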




\begin{fact}\label{fact:Li-1Aidisj}
For any $1 \leq i \leq T$, $L_{i-1}\cap M(A_i\cap R)=\emptyset$. Therefore,
$$ |L_i| = |L_{i-1}| + |A_i\cap R|.$$	
\end{fact}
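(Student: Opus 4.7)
The plan is to prove the fact by tracking how a vertex of $L$ can be matched by $M$ to a vertex of $R$, using the two facts that $L_0=U_L$ consists of unsaturated vertices and that the sets $A_i$ are, by construction, chosen disjointly from $X_{i-1}$.

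For the disjointness claim, I would argue by contradiction: suppose some $v \in L_{i-1}\cap M(A_i\cap R)$. Then there exists $u\in A_i\cap R$ with $(v,u)\in M$, and since $M$ is a matching, $u$ is the \emph{unique} matching partner of $v$. In particular $v$ is saturated, so $v\notin U_L = L_0$. Hence $v$ must have entered the $L_j$'s at some step $1\le j\le i-1$, meaning $v\in L_j\setminus L_{j-1}\subseteq M(A_j\cap R)$ by the defining recursion $L_j=L_{j-1}\cup M(A_j\cap R)$. This produces some $u'\in A_j\cap R$ with $(v,u')\in M$, and again by uniqueness of the matching partner, $u'=u$. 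Therefore $u\in A_j\subseteq X_j\subseteq X_{i-1}$, contradicting the construction of $A_i$ (its elements are, by definition, neighbors of $L_{i-1}$ \emph{not} in $X_{i-1}$).

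For the cardinality identity, the disjointness just established gives $|L_i|=|L_{i-1}|+|M(A_i\cap R)|$, so it suffices to show $|M(A_i\cap R)|=|A_i\cap R|$. Two things are needed here: (i) no vertex of $A_i\cap R$ is unsaturated, and (ii) $M$ restricted to the saturated vertices of $A_i\cap R$ is a bijection onto its image. Point (ii) is automatic because $M$ is a matching. For (i), if some $u\in A_i\cap R$ lay in $U_R$, then (since $u$ is a neighbor of $L_{i-1}$) there would be an alternating path from $U_L=L_0$ to some $v'\in L_{i-1}$ of length at most $2(i-1)$, extendable by the edge $(v',u)$ to an augmenting path of length at most $2i-1\le 2t-1\le 4t+1$, contradicting the hypothesis of \cref{lem:short-alternating-path} (no augmenting paths of length $\le 4t+1$ from $U_L$ to $U_R$).

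The main obstacle is essentially bookkeeping: I need to be careful that the ``alternating path to $v'\in L_{i-1}$'' really exists, which is exactly the content of the inductive fact stated just before \cref{fact:Li-1Aidisj}, and that this path combined with the edge to $u\in U_R$ constitutes an \emph{augmenting path} (not merely a walk) of the claimed length. Since $L_j$ and $R$-vertices in $X_j\cap R$ are added step by step, the alternating structure is preserved by construction, and the length bound $2i-1\le 4t+1$ is immediate from $i\le T\le t$. Nothing in the argument requires any further hypothesis on $G$ beyond what is already assumed, so the proof is essentially a one-line definitional check plus the contradiction argument above.
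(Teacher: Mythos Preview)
Your proof is correct and the disjointness argument is essentially identical to the paper's: assume $v\in L_{i-1}\cap M(A_i\cap R)$, note $v$ is saturated so $v\notin L_0$, find the first $j\le i-1$ with $v\in L_j\setminus L_{j-1}\subseteq M(A_j\cap R)$, and conclude $M(v)\in A_j\subseteq X_{i-1}$, contradicting $M(v)\in A_i$. For the cardinality identity you are actually more careful than the paper, which simply writes ``Therefore'' without justifying $|M(A_i\cap R)|=|A_i\cap R|$; your observation that $A_i\cap U_R=\emptyset$ (else a short augmenting path arises) is exactly the missing step, and the paper only makes this point later inside the proof of \cref{claim:chernoff}.
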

\begin{proof}
	For the sake of contradiction let $v\in M(A_i\cap R)$ such that $v\in L_{i-1}$ as well. Then, since $v$ has a match, $v\notin U_L$; so we must have $i\geq 2$. Let $1\leq j\leq i-1$ be the smallest index such that $v\in L_j$. That means that,  by construction, $M(v)\in A_j\cap R$. Therefore, $M(v)\in X_j\subseteq X_{i-1}$. So, $v\notin A_i$.
\end{proof}
Since in the above construction we only ``look at'' the first $\alpha|L_{i-1}|-|X_{i-1}|$ new neighbors of $L_{i-1}$ to construct $L_i$, it follows that all edges which have no endpoints in these sets  are conditionally independent. More precisely, we obtain the following Fact.
\begin{fact}
\label{claim:independence}
Let $\omega$ be chosen uniformly at random. For any $1\leq i<t$, conditioned on $L_0,\dots,L_{i-1}$, the law of $\omega$ on all edges that have no endpoints in $L_{i-1},X_{i-1}$ remain invariant, i.e., it is i.i.d., with expectation $1/2$ on each edge.
\end{fact}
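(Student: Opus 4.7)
The plan is to reduce the claim to the product structure of the uniform measure on $\{0,1\}^M$: if I can show that the tuple $(L_0,\dots,L_{i-1})$ is a measurable function of $\omega$ restricted to some set $M_{i-1}\subseteq M$ of matching edges, each of which has at least one endpoint in $L_{i-1}\cup X_{i-1}$, then the desired conditional law on the remaining edges follows by independence of the coordinates of $\omega$. Indeed, under $\mu_M$ the values $\{\omega(e)\}_{e\in M}$ are mutually independent uniform $\{0,1\}$ bits, so conditioning on the values $\omega|_{M_{i-1}}$ leaves $\omega|_{M\setminus M_{i-1}}$ i.i.d.\ uniform; and the set of matching edges with no endpoint in $L_{i-1}\cup X_{i-1}$ is contained in $M\setminus M_{i-1}$.

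First I would set $M_j := \{e\in M: e\cap X_j\neq \emptyset\}$ and prove by induction on $j=0,1,\dots,i-1$ the measurability statement that $(L_j,X_j)$ is determined by $\omega|_{M_j}$ together with the fixed data $(G,M,U_L,U_R,\text{vertex ordering})$. The base case is immediate: $L_0=U_L$ and $X_0=\emptyset$ are deterministic.

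For the inductive step, assume $(L_{j-1},X_{j-1})$ is determined by $\omega|_{M_{j-1}}$. The key observation is that the construction of $A_j$ as the lexicographically first $\alpha|L_{j-1}|-|X_{j-1}|$ neighbors of $L_{j-1}$ in $G$ avoiding $X_{j-1}$ uses only the graph $G$, the vertex ordering, and $(L_{j-1},X_{j-1})$ — no randomness is consulted. Therefore $A_j$, and hence $X_j=X_{j-1}\cup A_j$, is a function of $\omega|_{M_{j-1}}$ alone. To obtain $L_j = L_{j-1}\cup M(A_j\cap R)$, it remains to determine for each $v\in A_j$ whether $v$ lies in $R$ under $G_M(\omega)$. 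If $v$ is unsaturated this is fixed by the partition $U=U_L\cup U_R$; if $v$ is matched with matching edge $e_v$, the side of $v$ is determined solely by $\omega(e_v)$, and $e_v\in M_j$ since $v\in A_j\subseteq X_j$. Combining, $L_j$ is determined by $\omega|_{M_j}$, closing the induction.

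Finally I would apply the inductive conclusion at $j=i-1$: the whole process $(L_0,\dots,L_{i-1})$ is a function of $\omega|_{M_{i-1}}$. Every edge $e\in M_{i-1}$ has, by definition, an endpoint in $X_{i-1}\subseteq L_{i-1}\cup X_{i-1}$, so any matching edge with no endpoint in $L_{i-1}\cup X_{i-1}$ belongs to $M\setminus M_{i-1}$. Independence of the coordinates of $\omega$ under $\mu_M$ then yields that, conditional on $(L_0,\dots,L_{i-1})$, these edges are still i.i.d.\ with $\omega(e)\sim\mathrm{Unif}\{0,1\}$. The one place where care is required — and the most likely source of a slip — is confirming that $A_j$ truly depends only on $(L_{j-1},X_{j-1})$ and not on the bipartition $G_M(\omega)$; this relies on the precise reading that ``neighbors of $L_{j-1}$'' in the construction refers to neighbors in the underlying graph $G$ (which the expansion hypothesis $|N(L_{j-1})|\geq \alpha|L_{j-1}|$ also refers to), with the $L/R$ split deferred to the step $A_j\mapsto A_j\cap R$.
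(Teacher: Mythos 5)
Your proof is correct and is exactly the formalization of the paper's terse justification (the paper merely remarks, in one sentence, that the construction only ``looks at'' edges touching $L_{i-1}$ and $X_{i-1}$, and states the Fact without further proof). You supply the induction that makes this precise, in particular the key observation that $A_j$ is computed from the neighbors of $L_{j-1}$ in $G$ rather than in $G_M(\omega)$, so that $X_j$ and hence the revealed edge set $M_{i-1}$ are deterministic given $(L_0,\dots,L_{i-1})$, after which product independence of $\mu_M$ finishes the argument.
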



\begin{claim}
\label{claim:chernoff}
	For  $1\leq i\leq T$, 
	$$\PP{\omega\sim\mu}{|A_i \cap R| \leq  |A_i| / 4 \mid L_0,\dots,L_{i-1}}<\exp(-|A_i|/8).$$
\end{claim}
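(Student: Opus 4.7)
The strategy is to apply Hoeffding's inequality (\cref{thm:hoeffding}) to the indicator variables $B_v = \mathbbm{1}[v \in R_M(\omega)]$ for $v \in A_i$, after verifying that (i) these indicators are independent conditioned on $L_0,\dots,L_{i-1}$, and (ii) every $v \in A_i$ satisfies $\mathbb{E}[B_v \mid L_0,\dots,L_{i-1}] \geq 1/2$. Once both are in hand, since $|A_i \cap R| = \sum_{v \in A_i} B_v$, we get
\[
\mathbb{E}\bigl[|A_i \cap R| \,\bigm|\, L_0,\dots,L_{i-1}\bigr] \;\geq\; |A_i|/2,
\]
and Hoeffding's inequality with deviation $|A_i|/4$ gives the claimed bound $\exp(-2(|A_i|/4)^2/|A_i|) = \exp(-|A_i|/8)$.

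For the independence, by \cref{claim:independence} the variables $\omega(e)$ for edges $e$ with no endpoint in $L_{i-1}\cup X_{i-1}$ remain i.i.d.\ uniform on $\{0,1\}$ after conditioning on $L_0,\dots,L_{i-1}$. Given this conditioning, $A_i$ is fully determined (it is the lex-first prescribed number of vertices in $N(L_{i-1})\setminus X_{i-1}$), so the only remaining randomness in $B_v$ comes from $\omega$ on the matching edge $e_v = (v, M(v))$ when both endpoints lie outside $L_{i-1}\cup X_{i-1}$. Since $v \in A_i$ already forces $v \notin X_{i-1}$ and (because $A_i \subseteq N(L_{i-1})$, which by the paper's definition excludes $L_{i-1}$ itself) $v \notin L_{i-1}$, the randomness in $B_v$ is governed by whether $M(v) \in L_{i-1}\cup X_{i-1}$.

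The crucial combinatorial step is (ii): I claim no $v \in A_i$ can be \emph{forced} into $L$ by the conditioning, so each $B_v$ is either deterministically $1$ or an independent fair coin. I would proceed by cases on where $M(v)$ sits:
\begin{itemize}
\item If $v$ is unsaturated, then $v \in U_L \cup U_R$; but $v \in U_L$ would give $v \in L_0 \subseteq L_{i-1}$, contradicting $v \in A_i$, so $v \in U_R$ and $B_v = 1$ deterministically.
\item If $M(v) \in L_{i-1}$, then chasing the construction $L_j = L_{j-1} \cup M(A_j \cap R)$ inductively forces $v \in A_{j'}\cap R$ for some $j' \leq i-1$, giving $v \in X_{i-1}$, a contradiction; so this case is impossible.
\item If $M(v) \in X_{i-1}\setminus L_{i-1}$ and $M(v) \in L$ (of the bipartition), then $B_v = 1$ deterministically.
\item If $M(v) \in X_{i-1}$ and $M(v) \in R$, then $v = M(M(v)) \in M(A_j\cap R) \subseteq L_{j}$ for some $j \leq i-1$, so $v \in L_{i-1}$; but $A_i \cap L_{i-1} = \emptyset$, so this case is also impossible.
\item Otherwise $M(v) \notin L_{i-1}\cup X_{i-1}$, the coin $\omega(e_v)$ is still free by \cref{claim:independence}, and $B_v$ is an independent fair coin.
\end{itemize}
Thus $\mathbb{E}[B_v\mid L_0,\dots,L_{i-1}] \in \{1, 1/2\}$ for every $v \in A_i$, and the $B_v$'s are mutually independent. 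Summing yields the required lower bound on the conditional expectation, and Hoeffding finishes the proof.

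The main obstacle — and the only non-routine step — is ruling out a \emph{deterministically zero} $B_v$; this is exactly the last two impossibility arguments above. They rely on the (slightly subtle) invariant that any vertex which the bipartition has already placed on the $L$ side, through a matching update triggered by the algorithm, must itself have been enrolled in $L_{i-1}$. Once this invariant is articulated cleanly, the rest is a direct application of the independence lemma and a one-line Hoeffding bound.
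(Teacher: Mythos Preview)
Your overall case analysis is close to the paper's, but there is one genuine gap: your claim that the indicators $B_v$ for $v\in A_i$ are \emph{mutually independent} is false. Your case (e) covers all $v$ with $M(v)\notin L_{i-1}\cup X_{i-1}$, but this includes the situation $M(v)\in A_i$ (since $A_i\cap(L_{i-1}\cup X_{i-1})=\emptyset$). When both $v$ and $M(v)$ lie in $A_i$, the two indicators $B_v$ and $B_{M(v)}$ are determined by the \emph{same} coin $\omega\bigl((v,M(v))\bigr)$ and satisfy $B_v+B_{M(v)}=1$; they are perfectly anti-correlated, not independent. Hence you cannot invoke Hoeffding directly on the full family $\{B_v:v\in A_i\}$.

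The paper's proof isolates exactly this issue. It singles out the case $M(v)\in A_i$ (its case 3), observing that each such matched pair contributes exactly one vertex to $R$, and lets $A'_i\subseteq A_i$ be only those $v$ with $M(v)\notin X_i\cup L_{i-1}$ (note $X_i$, not $X_{i-1}$, so $M(v)\in A_i$ is excluded). It then writes
\[
|A_i\cap R|\;\geq\;\tfrac12\,|A_i\setminus A'_i|\;+\;\sum_{v\in A'_i}B_v,
\]
where now the $B_v$'s for $v\in A'_i$ are genuinely i.i.d.\ fair coins, and applies Hoeffding only to this sum. The resulting bound $\exp(-|A_i|^2/8|A'_i|)\leq\exp(-|A_i|/8)$ recovers the claim. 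Your argument is easily repaired by making this same split; everything else in your case analysis (including your observation that $M(v)\in L_{i-1}$ is in fact impossible, which the paper lists but does not rule out) is fine.
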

\begin{proof}
Note that given $L_0,\dots,L_{i-1}$, $X_1,\dots,X_i$ and $A_1,\dots,A_i$ are uniquely determined. 
Let $v\in A_i$. Consider the following cases:
\begin{itemize}
\item $v\in U_R$. This case cannot happen because we get an augmenting path of length $2i+1$ to $U_L$ which is a contradiction.
\item $v\in L_{i-1}$. This cannot happen because $v\in N(L_{i-1})$. This in particular shows $v\notin U_L$. So, $v$ has a match in $M$.
\item $M(v)\in A_i$. Then, by  \cref{def:bipartition} exactly one of $v,M(v)$ is in $R$.
\item $M(v)\in X_{i-1}$. If $M(v)\in R$ then we must have $v\in L_{i-1}$ which cannot happen as we said in case (2). Otherwise, $M(v)\in L$, so $v\in R$.
\item $M(v)\in L_{i-1}$. Then, $v\in R$.
\item $v\notin L_{i-1}, M(v)\notin X_i, L_{i-1}$. In this case since $v\in A_i$, by \cref{claim:independence}, $v\in R$ with probability $1/2$ independent of all other vertices of $A_i$.
\end{itemize}
Let $A'_i$ be the set of vertices $v$ that fall into the last case.
Say we have a Bernoulli $B_v$ with success probability $1/2$ for every $v\in A'_i$. Then, by above discussion, conditioned on $L_0,\dots,L_{i-1}$, with probability 1,
$$ |A_i \cap R|\geq |A_i\setminus A'_i|/2 + \sum_{v\in A'_i} B_v.$$
Therefore, by the Hoeffding bound (\cref{thm:hoeffding})
\begin{align*}\P{|A_i\cap R|\geq |A_i/4| \mid L_0,\dots,L_{i-1}} &\leq \P{\sum_{v\in A'_i} B_v \leq  |A'_i|/2 - |A_i|/4 \Bigm\vert L_0,\dots,L_{i-1}} \\
&\leq \exp(-|A_i|^2/8|A'_i|) \leq \exp(-|A_i|/8)	
\end{align*}
as desired.
%
%
%
\end{proof}

%
Since $|L_1|\geq |A_1\cap R|$ and $|A_1|=\alpha|L_0|$,
\begin{align}\PP{\mu}{|L_1|\geq (\alpha/4)|L_0|} &\geq \PP{\mu}{|A_1\cap R|\geq (\alpha/4)|L_0|} \nonumber\\
	&= \PP{\mu}{|A_1\cap R|\geq |A_1|/4}\geq 1-\exp(-|A_1|/8)=1-\exp(-\alpha|L_0|/8) \label{eq:basecaseL1}
\end{align}
where the last inequality follows form \cref{claim:chernoff}.

\begin{claim}
\label{claim:expanding-Li}
Let $\omega$ be chosen uniformly at random. For every $2\leq i\leq T$, we have
	\[
	\PP{\mu}{|L_{i}| \geq (\alpha / 4) |L_{i-1}| \bigm\vert |L_{i-1}| \geq (\alpha / 4) |L_{i-2}|, L_0,\dots,L_{i-1}} \geq 1 - \exp(-(\alpha - 4) |L_{i-1}|/8).
	\]
\end{claim}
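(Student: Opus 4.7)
The plan is to reduce the claim to \cref{claim:chernoff} by first obtaining a lower bound on $|A_i|$ under the conditioning event. The key identity to exploit is \eqref{eq:Xisize}, which gives $|X_{i-1}| = \alpha |L_{i-2}|$ and $|X_i| = \alpha |L_{i-1}|$, so that by construction
\[
|A_i| \;=\; |X_i| - |X_{i-1}| \;=\; \alpha\bigl(|L_{i-1}| - |L_{i-2}|\bigr).
\]
Under the conditioning $|L_{i-1}| \geq (\alpha/4)|L_{i-2}|$, equivalently $|L_{i-2}| \leq (4/\alpha)|L_{i-1}|$, this immediately yields
\[
|A_i| \;\geq\; \alpha|L_{i-1}| - 4|L_{i-1}| \;=\; (\alpha-4)|L_{i-1}|.
\]

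Next I would invoke \cref{claim:chernoff}, which states that conditioned on $L_0,\dots,L_{i-1}$, with probability at least $1 - \exp(-|A_i|/8)$ we have $|A_i \cap R| \geq |A_i|/4$. Since $|A_i|$ is determined by $L_0,\dots,L_{i-1}$, the lower bound $|A_i| \geq (\alpha-4)|L_{i-1}|$ holds deterministically on the conditioning event, and hence the failure probability is at most $\exp(-(\alpha-4)|L_{i-1}|/8)$.

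To conclude, I would combine the above with \cref{fact:Li-1Aidisj}, which gives $|L_i| = |L_{i-1}| + |A_i \cap R|$. On the event that $|A_i \cap R| \geq |A_i|/4$, we therefore get
\[
|L_i| \;\geq\; |L_{i-1}| + \tfrac{1}{4}|A_i| \;\geq\; |L_{i-1}| + \tfrac{\alpha-4}{4}|L_{i-1}| \;=\; \tfrac{\alpha}{4}|L_{i-1}|,
\]
which is precisely the desired conclusion. There is no real obstacle here once one extracts $|A_i| \geq (\alpha-4)|L_{i-1}|$ from the conditioning hypothesis; the only mildly subtle point is being careful that the conditioning is exactly the one compatible with \cref{claim:chernoff} (i.e., conditioning on $L_0,\dots,L_{i-1}$, which determines $A_i$ and makes the randomness of $\omega$ on the remaining edges still uniform by \cref{claim:independence}). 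This base-step style argument is then what lets the lemma be iterated inductively, using \eqref{eq:basecaseL1} as the base case.
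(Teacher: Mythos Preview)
Your proposal is correct and follows essentially the same approach as the paper: derive $|A_i|\ge(\alpha-4)|L_{i-1}|$ from \eqref{eq:Xisize} and the conditioning $|L_{i-1}|\ge(\alpha/4)|L_{i-2}|$, apply \cref{claim:chernoff} to get $|A_i\cap R|\ge|A_i|/4$ with the claimed probability, and then use \cref{fact:Li-1Aidisj} to conclude $|L_i|\ge(\alpha/4)|L_{i-1}|$. The only cosmetic difference is that the paper writes the chain of inequalities for the complementary (failure) event, whereas you argue the success event directly.
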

\begin{proof}
Suppose $|L_{i-1}| \geq (\alpha / 4) |L_{i-1}|$. Recall that by \cref{eq:Xisize} we have $|X_i|=\alpha|L_{i-1}|$. So we can write
\begin{align}
\begin{split}
\label{eq:Ai_Li}
	|A_i| = |X_i \setminus X_{i-1}| &= \alpha(|L_{i-1}| - |L_{i-2}|) \\
	&\geq \alpha  (1 - 4/\alpha)  |L_{i-1}| \\
	&= (\alpha - 4) |L_{i-1}|.
\end{split}
\end{align}

 Let $\mu'$ be $\mu$ conditioned on $|L_{i-1}|\geq (\alpha/4)|L_{i-2}|$ and $L_0,\dots,L_{i-1}$. Then,
\begin{align*}
	\PP{\mu'}{|L_i| \leq (\alpha / 4) |L_{i-1}|} 
	&= \PP{\mu'}{|L_i| - |L_{i-1}| \leq (\alpha / 4 - 1) |L_{i-1}|}  \\
	&= \PP{\mu'}{|A_i \cap R| \leq (\alpha / 4 - 1) |L_{i-1}|} \tag{\cref{fact:Li-1Aidisj}}\\
	&\leq \PP{\mu'}{|A_i \cap R| \leq |A_i| / 4} \tag{\cref{eq:Ai_Li}}\\
	&\leq \exp(-|A_i| / 8)\tag{\cref{claim:chernoff}}\\
	&\leq \exp(-(\alpha - 4) |L_{i-1}|/8) \tag{\cref{eq:Ai_Li}}, 
\end{align*}
completing the proof. 
\end{proof}

\begin{claim}
If $\alpha\geq 10$, for any $i\geq 1$ we have
	\[
	\PP{\mu}{T<i \vee (T\geq i \wedge |L_i| \geq (\alpha/4)^i |L_0|)} >1/2.
	\]
\end{claim}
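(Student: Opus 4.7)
My plan is to prove the claim by decomposing the complement event according to the first step at which the exponential growth fails, and then apply a union bound using the per-step estimate from \cref{claim:expanding-Li} (and \cref{eq:basecaseL1} for the very first step).

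Write $F_i=\{T<i\}\cup\{T\geq i\wedge|L_i|\geq(\alpha/4)^i|L_0|\}$ for the event in the claim. On $F_i^c$ we have both $T\geq i$ and $|L_i|<(\alpha/4)^i|L_0|$, so there must be a smallest index $J\in\{1,\dots,i\}$ with $|L_J|<(\alpha/4)|L_{J-1}|$. The minimality of $J$ forces $|L_{J-1}|\geq(\alpha/4)^{J-1}|L_0|$, and $J\leq i\leq T$ ensures that $L_J$ has indeed been constructed (i.e., we did not stop before step $J$). Thus $F_i^c\subseteq\bigcup_{j=1}^{i}\{J=j\}$, reducing matters to estimating $\P{J=j}$.

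For $J=1$ I would apply \cref{eq:basecaseL1} directly to get $\P{|L_1|<(\alpha/4)|L_0|}\leq\exp(-\alpha|L_0|/8)$. For $J=j\geq 2$, I would use the tower property, conditioning on $L_0,\dots,L_{j-1}$. On $\{J=j\}$ the predecessor condition $|L_{j-1}|\geq(\alpha/4)|L_{j-2}|$ required by \cref{claim:expanding-Li} automatically holds, and that claim gives
\[\P{J=j}\leq\E{\exp\!\left(-\frac{(\alpha-4)|L_{j-1}|}{8}\right)\mathbbm{1}[|L_{j-1}|\geq(\alpha/4)^{j-1}|L_0|]}\leq\exp\!\left(-\frac{(\alpha-4)(\alpha/4)^{j-1}|L_0|}{8}\right).\]
Summing via a union bound and using $|L_0|=|U_L|\geq 1$ (since $M$ is not perfect and $|U_L|=|U_R|$) yields
\[\P{F_i^c}\leq\exp(-\alpha/8)+\sum_{j=2}^{\infty}\exp\!\left(-\frac{(\alpha-4)(\alpha/4)^{j-1}}{8}\right).\]

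The main obstacle will be verifying that this sum is strictly less than $\tfrac12$ whenever $\alpha\geq 10$. Each term is monotonically decreasing in $\alpha$ (since both $\alpha-4$ and $(\alpha/4)^{j-1}$ are increasing positives), so the worst case is $\alpha=10$, where the first three terms are $\exp(-1.25)+\exp(-1.875)+\exp(-4.6875)\approx 0.287+0.153+0.009$ and all later terms are doubly exponentially small in $j$. Summing gives roughly $0.449<\tfrac12$, which establishes $\P{F_i}>\tfrac12$ as desired. The only subtlety to watch for is the bookkeeping around the stopping time $T$ and the measurability of the conditioning events, but once one observes that $\{J=j\}\subseteq\{T\geq j\}\cap\{|L_{j-1}|\geq(\alpha/4)|L_{j-2}|\}$, the rest is a one-line union bound followed by the numerical check above.
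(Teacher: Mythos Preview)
Your proof is correct and essentially the same as the paper's: the paper defines the per-step success events $E_j=\{T<j\}\cup\{T\ge j\wedge |L_j|\ge(\alpha/4)|L_{j-1}|\}$, lower-bounds $\P{E_1\wedge\cdots\wedge E_i}$ via the chain rule using \eqref{eq:basecaseL1} and \cref{claim:expanding-Li}, and then applies Weierstrass's inequality to turn the product into $1-\sum$, arriving at the same series and numerical check you perform. Your ``first failure'' decomposition on the complement together with a union bound is just the dual phrasing of that argument.
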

\begin{proof}
For $1\leq j\leq i$, let $E_j$ denote the event that $T<j\vee (T\geq j \wedge |L_j| \geq (\alpha / 4) |L_{j-1}|)$. 
Then,
\begin{align*}
\P{E_i} & \geq \P{E_1\wedge \dots \wedge E_i} \\
&= \P{E_1} \prod_{j=2}^i \P{E_j | E_1,\dots,E_{j-1}}\\
&\geq (1-\exp(-\alpha|L_0|/8)) \prod_{j=2}^i \Big(\P{T<j|E_1,\dots,E_{j-1}}\\
&\quad+\P{T\geq j|E_1,\dots,E_{j-1}}\E{1-\exp\left(\frac{-(\alpha-4)|L_{j-1}|}{8}\right) \Bigm\vert E_1,\dots,E_{j-1},T\geq j}\Big) \tag{\cref{claim:expanding-Li}}\\
&\geq (1-\exp(-\alpha|L_0|/8)) \prod_{j=2}^i \left(1-\exp\left(\frac{-(\alpha-4)}{8}(\alpha/4)^{j-1}|L_0|\right)\right) \\
&\geq 1-\exp(-\alpha/8)-\sum_{j=2}^i \exp(-\frac{\alpha-4}{8}(\alpha/4)^{j-1}) \tag{\cref{thm:weierstrass}}\\
&\geq 1-e^{-\alpha/8}-\sum_{j=0}^{\infty} e^{-\beta (\alpha/4)^j}\tag{for $\beta=\frac{\alpha(\alpha-4)}{32}$}\\
&\geq 1-e^{-\alpha/8}-\frac{e^{-\beta}}{1-e^{-\beta(\alpha/4-1)}}>1/2 \tag{$\alpha\geq 10$}.
\end{align*}
Note that in the third inequality we crucially use that if $E_1,\dots,E_{j-1}$ occur then either $T<j$, or $T\geq j$ and $|L_j|\geq (\alpha/4)^{j-1}$.
\end{proof}
Setting $t = \lceil \log_{\alpha / 4} \frac{2\epsilon n+1}{|U_L|} \rceil$ by the above statement we get $\P{T<t \vee (T=t\wedge |L_t|>2\eps n)}>1/2$. This completes the proof of \cref{lem:short-alternating-path}.	

\section{Completing the proofs of \cref{thm:alg,thm:lower-bound}}

\subsection{The Lower-Bound}

%

\begin{lemma}
\label{lem:Meps-lower-bound}
	Let $G = (V,E)$ be an $2n$-vertex $d$-regular $\eps$-expander. If $\eps<1/2$, then we have,
	\[
	m((1-\epsilon)n)\geq \left(\frac{d}{e}\right)^{n(1-\eps)} \cdot e^{-2\eps n}.
	\]
	\
\end{lemma}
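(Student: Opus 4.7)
The plan is to construct a $(1-\epsilon)n$-matching greedily, edge by edge, and count the number of ordered constructions. Set $k \seteq (1-\epsilon)n$. Having placed $i-1$ disjoint edges, let $S_i$ be the set of unsaturated vertices, so $|S_i| = 2(n-i+1)$. Applying the Expander Mixing Lemma (Fact \ref{fact:expander-mixing-lemma}) to the pair $(S_i, S_i)$ and recalling that each edge is counted twice, the induced subgraph $G[S_i]$ contains at least
\[
\frac{d|S_i|^2}{4n} - \frac{d\epsilon|S_i|}{2} \;=\; \frac{d(n-i+1)\bigl((n-i+1) - \epsilon n\bigr)}{n}
\]
edges. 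For $1 \leq i \leq k$ we have $n-i+1 \geq \epsilon n + 1 > \epsilon n$, so this quantity is strictly positive and we can pick such an edge to extend the partial matching.

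Next I would bound the number of ordered length-$k$ sequences of edges forming a matching below by the product of the per-step choices, and then divide by $k!$ since every unordered $k$-matching arises from exactly $k!$ ordered sequences. This gives
\[
m(k) \;\geq\; \frac{1}{k!}\prod_{i=1}^k \frac{d(n-i+1)\bigl((n-i+1) - \epsilon n\bigr)}{n} \;=\; \frac{d^k}{n^k\,k!}\cdot \frac{n!}{(n-k)!} \cdot \prod_{i=1}^k \bigl((n-i+1) - \epsilon n\bigr).
\]
The key arithmetic observation is that because $k = (1-\epsilon)n$, the factors $(n-i+1) - \epsilon n$ for $i = 1, \dots, k$ are exactly $k, k-1, \dots, 1$, so the final product equals $k!$. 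After cancellation,
\[
m(k) \;\geq\; \frac{d^k \cdot n!}{n^k (n-k)!}.
\]

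Finally, I would convert this to the claimed exponential form. Writing $n!/(n-k)! = k!\binom{n}{k}$, applying the standard estimate $\binom{n}{k} \geq (n/k)^k$, and invoking Stirling's bound (Theorem \ref{thm:stirling}) $k! \geq (k/e)^k$, the right-hand side is at least
\[
\frac{d^k (k/e)^k (n/k)^k}{n^k} \;=\; \left(\frac{d}{e}\right)^k \;=\; \left(\frac{d}{e}\right)^{(1-\epsilon)n},
\]
which is in fact stronger than the stated bound since $e^{-2\epsilon n} \leq 1$. There is no real obstacle in the argument; the only delicate point is the telescoping identity that the second product equals exactly $k!$, which is a happy coincidence of the choice $k=(1-\epsilon)n$ (for non-integer $(1-\epsilon)n$ one rounds and absorbs the small slack into the $e^{-2\epsilon n}$ factor, which is likely why the statement is phrased with this extra slack).
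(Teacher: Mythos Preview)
Your argument is correct and in fact yields the stronger bound $m((1-\epsilon)n)\geq (d/e)^{(1-\epsilon)n}$, without the extra $e^{-2\epsilon n}$ loss. The route, however, is genuinely different from the paper's. The paper runs a greedy procedure that at step $i$ deterministically fixes the lexicographically first maximum-degree vertex $u_i$ in $G[S_i]$ and only counts choices for its partner; it then needs a separate injectivity claim showing that distinct choice sequences $\langle a_1,\dots,a_k\rangle$ produce distinct matchings (this works precisely because $u_i$ is determined by $M_{i-1}$). Your approach instead lower-bounds the total number of \emph{ordered} edge sequences by the product of the edge counts in $G[S_i]$ and divides by $k!$; this sidesteps the injectivity argument entirely and harvests the full edge count from the Expander Mixing Lemma rather than just one vertex's degree. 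Quantitatively, your raw product exceeds the paper's by a factor of $\binom{n}{k}$ before Stirling, which is why you end up with $(d/e)^k$ rather than $(d/e)^k(1-\epsilon)^k$. The paper's approach has the minor advantage that the encoding is explicitly injective (no overcounting correction), while yours is shorter and sharper here; both are equally robust to rounding when $(1-\epsilon)n$ is not an integer.
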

\begin{proof}Let $k= n (1- \eps)$. We call a sequence of integers $\langle a_1, \ldots, a_k \rangle$ \emph{valid} if 
$1 \leq a_i \leq  \lceil d((n-i+1)/n-\eps)\rceil$ for all $1\leq i\leq k$.


Now, for  valid sequence $a = \langle a_1, \ldots, a_k \rangle$, we construct a $k$-matching $\mathcal{M}(a)$ as follows: We are going to construct a sequence of matchings $M_0 \subseteq M_1 \subseteq \cdots \subseteq M_k$, with the property that for $1 \leq i \leq k$, $M_i$ is going to be a matching of size $i$ in $G$. We then set $\mathcal{M}(a) \seteq M_k$.
We start with $M_0 = \emptyset$. For $i \geq 1$, given $M_{i-1}$, let $S_i$ be the set of unmatched vertices of $G$ with respect to $M_{i-1}$. Note that by construction $M_{i-1}$ is a matching of size $i-1$, so we have $|S_i| = 2n - 2(i-1)$. Further let $\Delta_i = \max_{u \in S} \deg_{G[S_i]}(u)$, and and let $u_i$ denote the lexicographically first vertex with degree $\Delta_i$ in $S_i$. Note that by \cref{claim:average-degree}, it should be that $\Delta_i \geq \lceil d((n-i+1)/n-\eps)\rceil$, and furthermore, by validity of $a$ we obtain $a_i  \leq \Delta_i$. Now let 
let $v_i$ be the $a_i$-th neighbor of $u_i$ in $G[S_i]$ with respect to the lexicographical order.  We set $M_i = M_{i-1} \cup \{(u_i, v_i)\}$.
In the next claim we show that any distinct pair of valid sequences give distinct $k$-matchings. Therefore, the number of $k$-matchings of $G$ is at least, 
  	\begin{align*}
  		\prod_{i = 1}^{k} (d\frac{n-i+1}{n}-d\eps)\geq d^k \prod_{i=1}^{k} \frac{n-i+1-\eps n}{n} 
  		\geq d^k \frac{k!}{n^k} 
		\geq (d/e)^k  \cdot (k/n)^k,
  	\end{align*}
  	where the last inequality uses \cref{thm:stirling}. 
  	By plugging in $k = (1-\epsilon) n $ we obtain
  	\begin{align*}
  			m((1-\eps) n) \geq (d/e)^{(1-\eps) n} \cdot (1-\epsilon)^{(1-\eps) n} 
  			\geq (d/e)^{(1-\eps) n} \cdot e^{-\eps n},
  	\end{align*}
  	where in the last inequality we used that $(1 - \epsilon)^{1-\epsilon} \geq e^{-\epsilon}$ for $\epsilon \leq 1/2$.
\end{proof}

\begin{claim}
	For any distinct valid sequences $a = \langle a_1, \ldots, a_k \rangle$ and $b = \langle b_1, \dots, b_k \rangle$ we have $\mathcal{M}(a) \neq \mathcal{M}(b)$.
\end{claim}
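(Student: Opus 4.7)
The plan is to prove the contrapositive by tracking the first coordinate at which the two sequences diverge and showing that the greedy construction commits to an edge that is preserved in the final matching. Let $i$ be the smallest index with $a_i \neq b_i$. Because the map $(M_{j-1}, a_j) \mapsto M_j$ is completely deterministic (the set $S_j$, the maximum induced degree $\Delta_j$, the lex-first maximizer $u_j$, and then the $a_j$-th neighbor of $u_j$ in lex order are all determined by $M_{j-1}$ and $a_j$ alone), an easy induction on $j < i$ gives $M_j^{(a)} = M_j^{(b)}$. In particular $M_{i-1}^{(a)} = M_{i-1}^{(b)}$, so $S_i$, $\Delta_i$, and $u_i$ agree across the two runs.

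Next I would argue that the edges added at step $i$ differ: since $a_i \neq b_i$, the $a_i$-th and the $b_i$-th neighbors of $u_i$ in $G[S_i]$ (in lex order) are distinct vertices $v_i^{(a)} \neq v_i^{(b)}$. Consequently $(u_i,v_i^{(a)}) \in M_i^{(a)}$ and $(u_i,v_i^{(b)}) \in M_i^{(b)}$ are different edges, both incident to the common vertex $u_i$.

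Finally I would show that this discrepancy survives to the final matching. Once $u_i$ has been matched at step $i$, it is removed from the unmatched set: for every $j > i$, $u_i \notin S_j$, and the new edge at step $j$ is chosen inside $G[S_j]$, so no subsequent edge is incident to $u_i$. Since the matchings are nested, $(u_i,v_i^{(a)}) \in \mathcal{M}(a)$ and $(u_i,v_i^{(b)}) \in \mathcal{M}(b)$, and $u_i$ has exactly one partner in each of the two final matchings. These partners differ, hence $\mathcal{M}(a)\neq \mathcal{M}(b)$.

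There is no real obstacle here; the only point that requires a moment of care is the ``no re-matching'' observation that $u_i$ is never touched again after step $i$, which is immediate from the fact that the greedy rule only picks edges inside $G[S_j]$. Everything else reduces to unwinding the deterministic definition of $\mathcal{M}(\cdot)$.
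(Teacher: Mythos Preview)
Your proof is correct and follows essentially the same approach as the paper: pick the first index of disagreement, observe that the construction is deterministic up to that point so the same $u_i$ is chosen, and conclude that $u_i$ gets different partners in the two matchings. Your version is in fact more explicit than the paper's, which jumps directly from ``$u_i$ is matched to different vertices'' to $\mathcal{M}(a)\neq\mathcal{M}(b)$ without spelling out the ``no re-matching'' observation you carefully justify.
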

\begin{proof}
Since $a\neq b$ there is an index $1\leq i\leq k$ such that $a_i\neq b_i$;  let $1 \leq i \leq k$ be the first such index. Since $a_j=b_j$ for $1\leq j\leq i-1$, by the above construction we have $S_i(a) = S_i(b)$. So, we would choose a unique vertex $u_i$ in both constructions but we match it to different vertices, since $a_i\neq b_i$. Therefore ${\cal M}(a)\neq {\cal M}(b)$.
\end{proof}

%
%
%
%
\begin{lemma}
\label{lem:Mn-Meps-lower-bound}
	Let $G$ be a $2n$ vertex $d$-regular, $\eps$-spectral expander for $\eps\leq 1/11$. We have,
	\[
	\frac{m((1-\epsilon)n )}{m(n)}\leq (2e/\eps)^{\eps n} d^{2\eps n+(4\eps n+2)/\ln C_1(\eps)}.
	\]
	where $C_1(\eps)$ is defined in \cref{lem:ULURmain}.
\end{lemma}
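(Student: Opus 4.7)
The plan is to telescope \cref{lem:ratio} from $k=(1-\eps)n$ up to $k=n-1$ and then bound the resulting product in two independent pieces: a purely combinatorial prefactor that will contribute the $(2e/\eps)^{\eps n}$ term, and a sum of the path-length exponents that will contribute the $d^{2\eps n + (4\eps n+2)/\ln C_1(\eps)}$ factor. Concretely, write
\[
\frac{m((1-\eps)n)}{m(n)} \;=\; \prod_{k=(1-\eps)n}^{n-1} \frac{m(k)}{m(k+1)} \;\leq\; \prod_{k=(1-\eps)n}^{n-1} \frac{2(k+1)}{n-k}\, d^{(\rho_k-1)/2},
\]
where $\rho_k = 4\,\lceil \log_{C_1(\eps)}\!\frac{2\eps n+1}{n-k}\rceil+1$ (the $\max$ with $0$ in \cref{lem:ULURmain} is irrelevant since $n-k\leq \eps n < 2\eps n+1$ throughout the range).

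First I would reindex with $j=n-k$, so $j$ runs from $1$ to $\eps n$, and bound the combinatorial prefactor by
\[
\prod_{j=1}^{\eps n} \frac{2(n-j+1)}{j} \;\leq\; \frac{(2n)^{\eps n}}{(\eps n)!} \;\leq\; \left(\frac{2ne}{\eps n}\right)^{\eps n} \;=\; (2e/\eps)^{\eps n},
\]
where the second inequality is Stirling (\cref{thm:stirling}). This already matches the claimed prefactor.

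Next I would handle the exponent of $d$, namely $\sum_{j=1}^{\eps n} (\rho_k-1)/2 = 2\sum_{j=1}^{\eps n} \lceil\log_{C_1(\eps)}\frac{2\eps n+1}{j}\rceil$. Using $\lceil x\rceil \leq x+1$ and collecting terms,
\[
2\sum_{j=1}^{\eps n}\Bigl(\log_{C_1(\eps)}\tfrac{2\eps n+1}{j}+1\Bigr) \;=\; 2\eps n + \frac{2}{\ln C_1(\eps)}\,\ln\!\frac{(2\eps n+1)^{\eps n}}{(\eps n)!}.
\]
Another application of Stirling gives $\ln\frac{(2\eps n+1)^{\eps n}}{(\eps n)!} \leq \eps n\bigl(\ln(2+1/\eps n)+1\bigr)$, and the elementary inequality $\ln(2+x)\leq 1+x$ for $x\geq 0$ (which I would verify by noting equality-analysis at $x=0$ and comparing derivatives) upgrades this to $\eps n(2+1/\eps n)=2\eps n+1$. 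Combining yields
\[
\sum_{j=1}^{\eps n}(\rho_k-1)/2 \;\leq\; 2\eps n + \frac{4\eps n+2}{\ln C_1(\eps)},
\]
which is exactly the desired exponent of $d$.

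Multiplying the two bounds gives the lemma. The only subtle point — the part that requires a bit of care rather than just symbol-pushing — is the final trick $\ln(2+x)\leq 1+x$ that converts the naive Stirling estimate (which would give a constant like $\ln(2e)$ in front of $\eps n$ and hence a looser bound) into precisely $2\eps n + 1$, so that the coefficient $4\eps n+2$ in the statement comes out correctly. Everything else is routine telescoping, reindexing, and Stirling.
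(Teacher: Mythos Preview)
Your proposal is correct and follows essentially the same route as the paper: telescope via \cref{lem:ratio}, bound the combinatorial prefactor $\prod 2(k+1)/(n-k)$ by $(2e/\eps)^{\eps n}$ using Stirling, and bound the exponent of $d$ by applying $\lceil x\rceil \le x+1$ and then Stirling again on $(2\eps n+1)^{\eps n}/(\eps n)!$. The paper compresses your final step into the single line $\tfrac{(2k+1)^k}{k!}=\tfrac{k^k}{k!}(2+1/k)^k\le e^{2k+1}$, which is exactly your $\ln(2+x)\le 1+x$ trick combined with $k^k/k!\le e^k$; so even the ``subtle point'' you flagged is the same in both proofs.
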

\begin{proof}
For $k=\eps n$, we can write
\begin{align*}
	\frac{m(n(1-\eps))}{m(n)} =  \prod_{i=n-k}^{n-1} \frac{m(i)}{m(i+1)}
	&\leq  \prod_{i=n-k}^{n-1} \frac{2(i+1)}{n-i} d^{2\log_{C_1(\eps)}\frac{2\eps n+1}{n-i}+2} \tag{\cref{lem:ratio}}  \\
	&\leq \frac{2^k n^k d^{2k}}{k!} d^{2\sum_{i=1}^{\eps n} \log_{C_1(\eps)}\frac{2\eps n+1}{i}} \\
	&\leq (2ed^2n/k)^k d^{2\log_{C_1(\eps)}\frac{(2k+1)^k}{k!}} \leq (2ed^2n/k)^k d^{(4k+2)/\ln C_1(\eps)},
	\end{align*}
	where in the second to last inequality we used \cref{thm:stirling} and in the last inequality we used $\frac{(2k+1)^k}{k!}=\frac{k^k}{k!}(2+1/k)^k\leq e^{2k+1}$.
	Plugging $k = \epsilon n$ into  the above inequality proves the claim.
\end{proof}
\begin{proof}[Proof of \cref{thm:lower-bound}]
Using \cref{lem:Meps-lower-bound,lem:Mn-Meps-lower-bound} we can write and using  for $\eps\leq 1/11$, $\ln (C_1(\eps))\geq 2$,
\begin{align*}
m(n)\geq \frac{e^{-\epsilon n}(d/e)^{n(1-\eps)}  }{(2e/\eps)^{\eps n}d^{2\eps n+(4\eps n+2)/\ln C_1(\eps)}}	\geq \left(\frac{d}{e}\right)^n \left(\frac{\eps}{2e^3 d^6}\right)^{\eps n}
\end{align*}
as desired. 
\end{proof}

\subsection{Sampling / Counting Perfect Matchings}
As an immediate corollary of \cref{lem:ratio} we prove \cref{thm:alg}.
In particular,
\begin{equation}\label{eq:mn-1mn}
	\frac{m(n-1)}{m(n)} \leq 2n d^{2\log_{C_1(\eps)}(2\eps n+1)+2}
\end{equation}
So, it follows from the following theorem of \cite{JS89} that for any $\delta>0$ we can sample a perfect matching of $G$ from a distribution $\mu$ of total variation distance $\delta$ of the uniform distribution in time $\poly(n^{\frac{\log d}{\log \eps^{-1}}},\log(1/\delta))$.
\begin{theorem}[{Jerrum and Sinclair \cite[Thm 3.6]{JS89}}]
	Let $G$ be a graph with $2n$ vertices. There is a Markov chain with a uniform stationary distribution on the space $n$ and $n-1$ matchings of $G$ such that that mixes in time $\poly(n, \frac{m(n-1)}{m(n)})$.
\end{theorem}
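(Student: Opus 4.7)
The plan is to follow the classical Jerrum--Sinclair construction and canonical paths analysis. First I would define the Markov chain on the state space $\Omega = \mathcal{M}_n(G) \cup \mathcal{M}_{n-1}(G)$ consisting of all matchings of size $n$ and $n-1$. From a state $M$, the chain samples an edge $e=(u,v)\in E(G)$ uniformly at random and performs the unique allowed local move relating $M$ and $e$: if $e \in M$ (so $|M|=n$), remove $e$; if $|M|=n-1$ and both endpoints of $e$ are unmatched, add $e$; if $|M|=n-1$ and exactly one endpoint, say $v$, is matched to some $w$, replace $(v,w)$ by $e$ (a rotation). Otherwise stay put, and additionally self-loop with probability $1/2$ to ensure aperiodicity. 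Each allowed transition $M \to M'$ is symmetric in the sense that the same edge resampled at $M'$ undoes the move with equal probability, so detailed balance with the uniform distribution on $\Omega$ holds and the stationary distribution claim follows.

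For the mixing time I would use the canonical paths method. For each ordered pair $(I,F)\in\Omega^2$ define a path $\gamma_{I,F}$ in the transition graph as follows. The symmetric difference $I\oplus F$ decomposes into vertex-disjoint alternating paths and even cycles, possibly with a single extra short component when $|I|\neq |F|$. Fix once and for all an ordering on the vertices of $G$, hence on these components, and process them one at a time: for each component apply the prescribed sequence of rotate / add / remove operations that flips $I$-edges into $F$-edges within it, producing a valid sequence of chain transitions from $I$ to $F$ of length $O(n)$.

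The heart of the argument is bounding the congestion
\[
\rho^* \;=\; \max_{t=(M,M')} \frac{1}{\pi(M)\,P(M,M')} \sum_{(I,F):\, t\in \gamma_{I,F}} \pi(I)\pi(F)\,|\gamma_{I,F}|.
\]
For each transition $t$ on $\gamma_{I,F}$ I would define an encoding $\eta_t(I,F) := I\oplus F \oplus M \oplus e^*$, where $e^*$ is the ``currently active'' edge of the move, and argue that the resulting set is itself a matching in $\Omega$ (possibly augmented by one extra edge that can be recorded with $O(\log n)$ bits). A constant amount of side information (which component is being unwound and what type of move $t$ is) then makes $\eta_t$ injective into $\Omega \times [\poly(n)]$. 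Using $\pi \equiv 1/|\Omega|$ together with $|\Omega|/|\mathcal{M}_n(G)| = 1 + m(n-1)/m(n)$, the injection gives $\rho^* \le \poly(n,\, m(n-1)/m(n))$. The mixing time bound then follows from Sinclair's standard canonical paths theorem, which bounds mixing time by $O(\rho^* \log|\Omega|)$.

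The main obstacle is the encoding step: depending on where $t$ sits within the component currently being processed, $\eta_t(I,F)$ can occasionally fail to be a matching, and has to be defined piecewise across the several move-types (rotate vs.\ add vs.\ remove) with a constant amount of auxiliary advice to guarantee injectivity into $\Omega$ (possibly enlarged by one marked edge). This is the core combinatorial bookkeeping of the Jerrum--Sinclair proof, and rather than reproving it in full I would cite their encoding verbatim and only verify that nothing in it uses bipartiteness.
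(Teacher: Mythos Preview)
The paper does not prove this statement at all: it is quoted verbatim as \cite[Thm~3.6]{JS89} and used as a black box to derive \cref{thm:alg}. So there is nothing in the paper to compare your argument against.

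That said, your sketch is essentially the standard modern proof of the Jerrum--Sinclair mixing bound (Markov chain on $\mathcal{M}_n\cup\mathcal{M}_{n-1}$, canonical paths via the alternating structure of $I\oplus F$, injective encoding into near-matchings to bound congestion). One historical quibble: the original \cite{JS89} argument went through conductance rather than canonical paths; the cleaner canonical-paths version you outline is due to Sinclair's later refinement. Either way, you are right that nothing in the encoding uses bipartiteness, so the cited theorem applies to general graphs as stated here. For the purposes of this paper, simply citing \cite{JS89} (as the authors do) is all that is required.
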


Furthermore, Jerrum and Sinclair \cite[Thm 5.3]{JS89} showed how to estimate the number of perfect matchings up to $1\pm\delta$ multiplicative factor in time $\poly(n,1/\delta,\frac{m(n-1)}{m(n)})$. So, plugging in \cref{eq:mn-1mn} into their theorem also allows us to approximate the number of perfect matchings (up to $1\pm\delta$ multiplicatively) in  $\eps$-expander regular graphs in time $\poly(n^{\frac{\log d}{\log \eps^{-1}}},1/\delta)$.

\section{A non-regular counter-example}
In this section we construct an infinite family of {\em non-regular} strong spectral expanders that do not have any perfect matchings.
This shows that the regularity assumption in \cref{thm:lower-bound} is necessary.
 

  \begin{lemma}
  \label{lem:augmented}
  Given  a $d$-regular graph $G = (V, E)$  with $2n$ vertices, 
  there exists a graph $H = (V', E')$ with $2n + 2$ vertices such that 
  \begin{itemize}
    \item $H$ does not have any perfect matchings.
    \item $\sigma_2(\tilde{A}_H) \leq \sigma_2(\tilde{A}_G) + \sqrt{5/d}$.
  	\item $H$ has $2n-1$ vertices of degree $d$, one vertex of degree $d+2$, and two vertices of degree $1$. 
  \end{itemize}
 \end{lemma}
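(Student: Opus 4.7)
The plan is to construct $H$ by attaching two degree-one pendant vertices to an arbitrary vertex of $G$, and then bound $\sigma_2(\tilde A_H)$ by comparing $\tilde A_H$ to a zero-padded embedding of $\tilde A_G$ via \cref{thm:hoffman}.

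First, I would fix any $v \in V$, introduce two fresh vertices $a, b \notin V$, and define $H = (V \cup \{a,b\},\; E \cup \{(v,a),(v,b)\})$. The stated degree sequence is then immediate. For the matching property, note that $a$ and $b$ each have $v$ as their unique neighbor, so any perfect matching of $H$ would have to contain both $(v,a)$ and $(v,b)$, contradicting that $v$ can be saturated by at most one matching edge; hence $H$ has no perfect matching.

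For the spectral comparison, I would let $B \in \R^{(2n+2)\times(2n+2)}$ be the symmetric block matrix with $\tilde A_G$ sitting in the $V \times V$ block and zeros elsewhere. Since $B$ is block diagonal with one zero block, its spectrum is that of $\tilde A_G$ together with two extra zeros, so its second largest eigenvalue is $\max(\lambda_2(\tilde A_G), 0)$ and its smallest is $\min(\lambda_{2n}(\tilde A_G), 0)$; both lie in $[-\sigma_2(\tilde A_G), \sigma_2(\tilde A_G)]$. Next I would compute $\|\tilde A_H - B\|_F^2$ entry by entry. The only nonzero entries of $\tilde A_H - B$ come from two sources: the $2d$ entries at positions $(v,w)$ and $(w,v)$ for edges $(v,w) \in E$ each equal to $\tfrac{1}{\sqrt{d(d+2)}} - \tfrac{1}{d}$ (because the normalization at $v$ changes from $1/\sqrt d$ to $1/\sqrt{d+2}$), and the $4$ entries at $(v,a),(a,v),(v,b),(b,v)$ each equal to $\tfrac{1}{\sqrt{d+2}}$. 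Using $\sqrt d - \sqrt{d+2} = -2/(\sqrt d + \sqrt{d+2})$ and $(\sqrt d + \sqrt{d+2})^2 \geq 4d$, the first contribution is bounded by $2/(d^2(d+2))$, and adding the second contribution $4/(d+2)$ gives
\[
\|\tilde A_H - B\|_F^2 \;\leq\; \frac{4d^2 + 2}{d^2(d+2)} \;\leq\; \frac{5}{d}.
\]

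Finally, I would invoke \cref{thm:hoffman}: with eigenvalues listed in decreasing order, $\sum_i (\mu_i(\tilde A_H) - \nu_i(B))^2 \leq 5/d$, so $|\mu_i - \nu_i| \leq \sqrt{5/d}$ for every $i$. Applying this at $i = 2$ gives $\mu_2(\tilde A_H) \leq \nu_2(B) + \sqrt{5/d} \leq \sigma_2(\tilde A_G) + \sqrt{5/d}$, and at $i = 2n+2$ gives $|\mu_{2n+2}(\tilde A_H)| \leq |\nu_{2n+2}(B)| + \sqrt{5/d} \leq \sigma_2(\tilde A_G) + \sqrt{5/d}$; together these yield $\sigma_2(\tilde A_H) \leq \sigma_2(\tilde A_G) + \sqrt{5/d}$. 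I expect the only nontrivial step to be the Frobenius estimate pinning down the constant $5$; the combinatorial verification of no perfect matching and the Hoffman--Wielandt invocation are both routine, and the subtlety of matching sorted eigenvalues across two differently sized matrices is absorbed cleanly by the zero-padding construction.
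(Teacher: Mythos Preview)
Your proposal is correct and follows essentially the same approach as the paper: the identical pendant-vertex construction, the same zero-padding of $\tilde A_G$ to a $(2n+2)\times(2n+2)$ matrix, and the same application of \cref{thm:hoffman} to transfer the Frobenius bound to each eigenvalue. Your Frobenius-norm computation is in fact more carefully written than the paper's (which appears to contain typos, writing $d+1$ in place of $d+2$ and $2(d-1)$ in place of $2d$), but the structure of the argument is identical.
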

\begin{proof}
Say $V=\{v_1,\dots,v_{2n}\}$. To construct $H$, we add two new vertices $v_{2n+1},v_{2n+2}$ and we  connect both of them to $v_{2n}$. 
Clearly  $H$ has no perfect matchings. 
We abuse notation and extend the normalize adjacency matrix of $G$, $\tilde{A}_G$ by adding two all-zeros rows and two all-zeros columns. Clearly, only introduces two new zero eigenvalues, and the $\sigma_2(\tilde{A}_G)$ remains invariant. 
It follows by a simple calculation that
\[
\|\tilde{A}_G - \tilde{A}_H\|_F^2 = 2(d-1)\cdot  \left(\frac{1}{d(d+1)}\right)^2 + 4 \left(\frac{1}{\sqrt{d+1}}\right)^2 \leq \frac{2}{d^3}+ \frac{4}{d} \leq \frac{5}{d}.
\]

Therefore, by \cref{thm:hoffman}, for any $1 \leq i \leq 2n+2$ we have 
$$|\lambda_i(\tilde{A}_G) - \lambda_i(\tilde{A}_H)|^2 \leq \sum_{j=1}^{2n+2} |\lambda_j(\tilde{A}_G)-\lambda_j(\tilde{A}_H)|^2 \leq \|\tilde{A}_G - \tilde{A}_H\|^2_F \leq 5/d.$$ 
Therefore we obtain
 $\sigma_2(\tilde{A}_H)\leq  \sigma_2(\hat{A}_G)+\sqrt{5/d}$. 
\end{proof}
Recall by the work Friedman \cite{Fri08,Bor19} for $d \geq 3$ and sufficiently large $n$, there exists a $d$-regular $\left(\frac{2\sqrt{d-1}}{d}+o(1)\right)$-expander $G_{2n,d}$ on $2n$ vertices. \cref{cor:expnoperfect} is immediate.


\printbibliography

@article{GK10,
title = {A deterministic approximation algorithm for computing the permanent of a 0,1 matrix},
journal = {Journal of Computer and System Sciences},
volume = {76},
number = {8},
pages = {879-883},
year = {2010},
author = {David Gamarnik and Dmitriy Katz},
keywords = {Perfect matching, Efficient algorithms,  hardness},

}

@article{CS12,
author={Chudnovsky, M. and Seymour, P}, 
title={Perfect matchings in planar cubic graphs},
journal={Combinatorica}, 
volume={32}, 
pages={403--424}, 
year={2012},
}

@article{LP86,
author={L. Lov\'asz and M.D. Plummer}, 
title={Matching Theory}, 
publisher={Elsevier Science}, 
city={Amsterdam}, 
year={1986},
}

@article{Ego81,
author={G.P. Egorychev}, 
title={The solution of van der Waerden’s problem for permanents}, 
journal={Advances in Math.}, 
volume={42}, 
pages={299--305}, 
year={1981},
}

@article{Fal81,
author={D. I. Falikman}, 
title={Proof of the van der Waerden’s conjecture on the permanent of a doubly stochastic matrix}, 
journal={Mat. Zametki}, 
volume={29}, 
number={6}, 
pages={931--938}, 
year={1981}, 
note={(in Russian)},
}

@inproceedings{Gur06,
  author    = {Leonid Gurvits},
  editor    = {Jon M. Kleinberg},
  title     = {Hyperbolic polynomials approach to Van der Waerden/Schrijver-Valiant
               like conjectures: sharper bounds, simpler proofs and algorithmic applications},
  booktitle = {STOC},
  pages     = {417--426},
  publisher = {{ACM}},
  year      = {2006},
}

@article{JV96,
  author    = {Mark Jerrum and
               Umesh V. Vazirani},
  title     = {A Mildly Exponential Approximation Algorithm for the Permanent},
  journal   = {Algorithmica},
  volume    = {16},
  number    = {4/5},
  pages     = {392--401},
  year      = {1996},
 }

@article{Bar99,
  author    = {Alexander I. Barvinok},
  title     = {Polynomial Time Algorithms to Approximate Permanents and Mixed Discriminants
               Within a Simply Exponential Factor},
  journal   = {Random Struct. Algorithms},
  volume    = {14},
  number    = {1},
  pages     = {29--61},
  year      = {1999},
}

@inproceedings{BSVV06,
    author    = {Ivona Bez{\'{a}}kov{\'{a}} and
               Daniel Stefankovic and
               Vijay V. Vazirani and
               Eric Vigoda},
  title     = {Accelerating simulated annealing for the permanent and combinatorial
               counting problems},
  booktitle = {SODA},
  pages     = {900--907},
  publisher = {{ACM} Press},
  year      = {2006},
}

@article{RSZ16,
author={Mark Rudelson and Alex Samorodnitsky and Ofer Zeitouni},
title={Hafnians, perfect matchings and Gaussian matrices}, 
journal={Ann. Probab.} ,
volume={44},
number={4},
pages={2858 -- 2888}, 
month={7}, 
year={2016},
}

@article{AOV21,
author={N. Anari and S. {Oveis Gharan} and C. Vinzant}, 
title={Log-Concave Polynomials I: Entropy and a Deterministic Approximation Algorithm for Counting Bases of Matroids},
year={2021},
journal={Duke Mathematical Journal},
note={to appear},
}

@article{Sch98,
author={A. Schrijver}, 
title={Counting 1-factors in regular bipartite graphs}, 
journal={Journal of Combinatorial Theory}, 
volume={B 72},
year={1998}, 
pages={122--135},
}

@article{Bor19,
  TITLE = {{A new proof of Friedman's second eigenvalue Theorem and its extension to random lifts}},
  AUTHOR = {Bordenave, Charles},
  JOURNAL = {{Annales scientifiques de l'Ecole normale sup{\'e}rieure}},
  YEAR = {2019},
  HAL_ID = {hal-02407119},
  HAL_VERSION = {v1},
}

@artcile{LSW00,
author={Linial, N. and Samorodnitsky, A. and Wigderson, A.}, 
year={2000}, 
title={A deterministic strongly polynomial algorithm for matrix scaling and approximate permanents},
journal={Combinatorica},
volume={20},
pages={545--568},
}

@article{Fri08,
author={J. Friedman}, 
title={A proof of Alon’s second eigenvalue conjecture and related problems},
journal={Mem. Amer. Math. Soc.},
volume={195},
number={910},
year={2008},
}

@article{EKFN11,
   title={Exponentially many perfect matchings in cubic graphs},
   volume={227},
   ISSN={0001-8708},
     number={4},
   journal={Advances in Mathematics},
   publisher={Elsevier BV},
   author={Esperet, Louis and Kardo\v{s}, Franti\v{s}ek and King, Andrew D. and Kr\'al, Daniel and Norine, Serguei},
   year={2011},
   month={7},
   pages={1646--1664}
}

@article{JSV04,
  title={A polynomial-time approximation algorithm for the permanent of a matrix with nonnegative entries},
  author={Jerrum, Mark and Sinclair, Alistair and Vigoda, Eric},
  journal={Journal of the ACM (JACM)},
  volume={51},
  number={4},
  pages={671--697},
  year={2004},
  publisher={ACM New York, NY, USA}
}

@article{BM86,
  title={The number of matchings in random regular graphs and bipartite graphs},
  author={Bollob{\'a}s, B{\'e}la and McKay, Brendan D},
  journal={Journal of Combinatorial Theory, Series B},
  volume={41},
  number={1},
  pages={80--91},
  year={1986},
  publisher={Elsevier}
}

@article{Bar13,
  title={A bound for the number of vertices of a polytope with applications},
  author={Barvinok, Alexander},
  journal={Combinatorica},
  volume={33},
  number={1},
  pages={1--10},
  year={2013},
  publisher={Springer}
}

@article{JS89,
  title={Approximating the permanent},
  author={Jerrum, Mark and Sinclair, Alistair},
  journal={SIAM journal on computing},
  volume={18},
  number={6},
  pages={1149--1178},
  year={1989},
  publisher={SIAM}
}

@article{tanner1984explicit,
  title={Explicit concentrators from generalized N-gons},
  author={Tanner, R Michael},
  journal={SIAM Journal on Algebraic Discrete Methods},
  volume={5},
  number={3},
  pages={287--293},
  year={1984},
  publisher={SIAM}
}

\end{document}